\documentclass[11pt, oneside]{article} 
\usepackage[margin=2cm]{geometry}  
\usepackage{graphicx}
\usepackage{newtxtext}
\usepackage{newtxmath}
\usepackage[authoryear]{natbib}
\usepackage{hyperref}
\hypersetup{
    colorlinks = true,
    urlcolor   = blue,
    citecolor  = black,
}
\newcommand{\RomanNumeralCaps}[1]
\linenumbers
\usepackage{graphicx}%
\usepackage{dcolumn}%
\usepackage{bm}%
\usepackage{hyperref}%
\usepackage{graphicx}%
\usepackage{dcolumn}%
\usepackage{bm}%
\usepackage[english]{babel}
\usepackage{amsmath}

\usepackage{amsthm}
\usepackage{wasysym}
\usepackage{float} 
\usepackage[color=pink]{todonotes}
\usepackage{subcaption}
\usepackage{comment}
\usepackage{framed}

\usepackage{listings}%

\lstset{
    language=Mathematica,
    basicstyle=\ttfamily\small,
    keywordstyle=\color{blue},
    commentstyle=\color{gray},
    stringstyle=\color{red},
    numbers=left,
    numberstyle=\tiny\color{gray},
    stepnumber=1,
    numbersep=10pt,
    backgroundcolor=\color{white},
    showspaces=false,
    showstringspaces=false,
    showtabs=false,
    frame=single,
    tabsize=4,
    captionpos=b,
    breaklines=true,
    breakatwhitespace=false,
    escapeinside={(*@}{@*)}
}

\newtheorem{theorem}{Theorem}[section]
\newtheorem{remark}{Remark}[section]

\newtheorem{lemma}{Lemma}[section]

\newtheorem{proposition}{Proposition}[section]

\newcommand{\p}{\partial}

\setlength{\parskip}{2pt} %
\setlength{\parindent}{0pt} %

\begin{document}
\title{Compound Burgers -- KdV Soliton Behaviour: Refraction, Reflection and Fusion}

\author{Darryl D. Holm$^1$, Ruiao Hu$^1$, Oliver D. Street$^{1,2}$, Hanchun Wang$^{1,3}$\footnote{Corresponding author, email: hw660@cam.ac.uk} \\ \footnotesize
(1) Department of Mathematics, Imperial College London. \\ \footnotesize
(2) Grantham Institute, Imperial College London. \\ \footnotesize
(3) Department of Applied Mathematics and Theoretical Physics, University of Cambridge.
\\ \small
}
\date{}
\maketitle
\begin{abstract}

\noindent We consider a coupled PDE system between the Burgers equation and the KdV equation to model the interactions between `bore'-like structures and wave-like solitons in shallow water. Two derivations of the resulting Burgers–swept KdV system are presented, based on Lie group symmetry and reduced variational principles. Exact compound soliton solutions are obtained, and numerical simulations show that the Burgers and KdV momenta tend toward a balance at which the coupled system reduces to the integrable Gardner equation. The numerical simulations also reveal rich nonlinear solution behaviours that include refraction, reflection, and soliton fusion, before the balance is finally achieved.

\begin{description}
\item[Key Words: ] 
 Burgers Equation, KdV Equation, soliton, near-integrable system
\end{description}
\end{abstract}
{
  \hypersetup{hidelinks}
  \tableofcontents
}

\section{\label{sec:level1}Introduction}

The derivation of nonlinear models of Wave Current Interaction (WCI) dynamics has a long history in fluid mechanics. Standard treatments include the Generalised Lagrangian Mean theory by \cite{AM1978} as extended, e.g., in \cite{Grimshaw1984}. For recent reviews, see e.g., \cite{Ablowitz2011, Buhler2014}. 

One of the most dramatic and well-studied examples of WCI arises in the propagation of a  tidal bore. A tidal bore is a strong tide that pushes up a river, against the current. In the frame of the front, the waves behind the front move backward and the waves ahead of the front move forward. These undulating waves are known as `whelps'. 

\color{black}
During the past few decades, a new paradigm for modelling WCI has emerged which treats the elevation of surface waves as a type of \emph{symmetry breaking} of the surface associated with the current flow. In the symmetry breaking paradigm, surface wave elevation fields are modelled as \emph{order parameters} whose dynamics evolve as a separate degree of freedom while being swept along by the fluid flow. E.g., surface waves modelled as localised structures through various generalisations of the Korteweg–de Vries (KdV) equation in \citep{ostrovsky2024localized}.

\begin{figure}[H]
\centering
\includegraphics[width=0.9\linewidth]{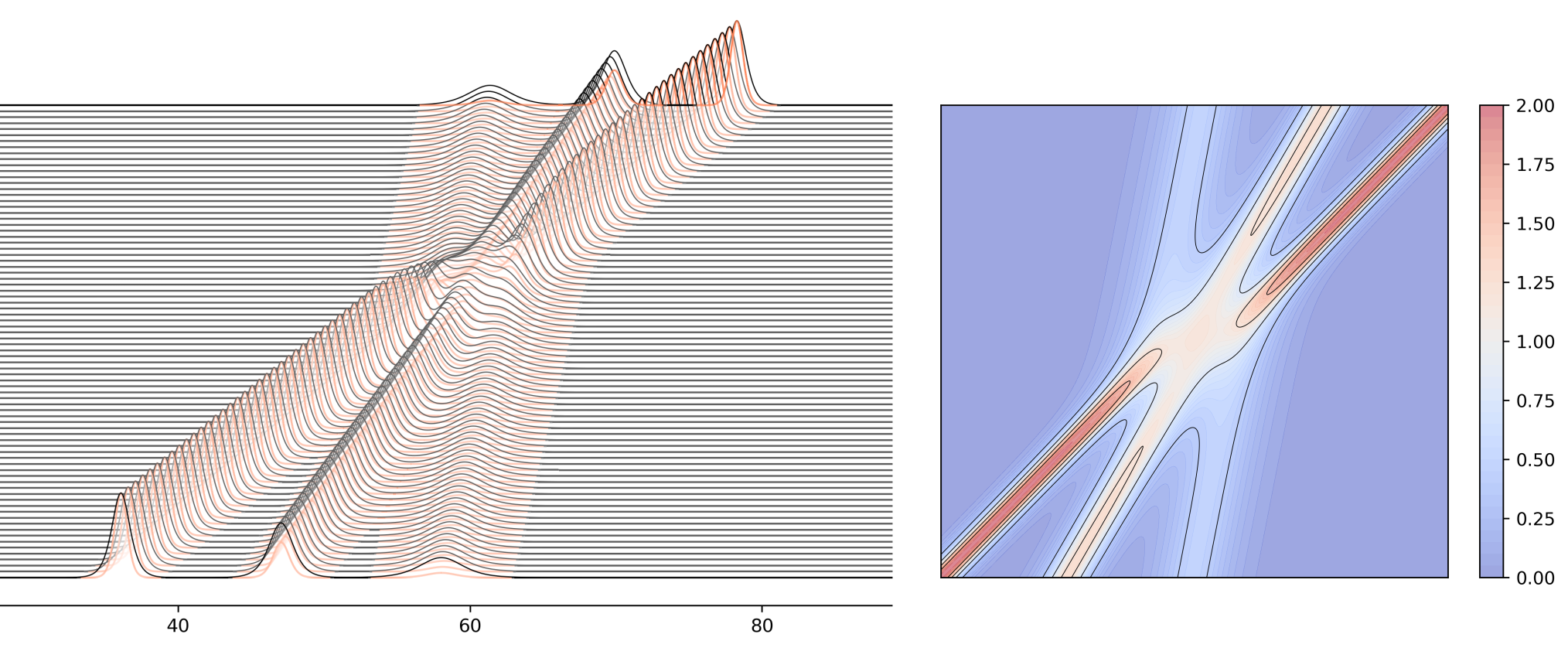}
    \caption{Propagation and collisions of three compound solitons in the Burgers–swept KdV system \eqref{eq: Burgers–swept KdV} that satisfies \( u = v^2/2 \). Left: waterfall plot of \( u \) (red, Burgers velocity) and \( v \) (black, KdV velocity). Right: contour plot of \( v(x,t) \), showing the emergence and collision of three compound solitons, which retain their forms but reverse order after interaction.}
    \label{Gardner-fig}
\end{figure}

Following this symmetry-breaking paradigm, the Burgers-swept Korteweg-de Vries (Burgers–swept KdV) system \eqref{eq: Burgers–swept KdV} was first proposed in \cite{Dombret2025}. In this formulation, the Euler-Poincar\'e Lagrangian of the Burgers equation is coupled to the Lagrangian of the KdV equation following the Dirac--Frenkel (DF) methodology 
\citep{Frenkel1934WaveMA}. 
\begin{equation}
\begin{split}
\textbf{Burgers: \qquad}&u_t + 3u{u_x} \qquad \;\;\;\;\; =  - v{\partial _x}(  3{v^2} + {\gamma{v_{xx}}} )\,,\\
\textbf{KdV: \qquad}&v_t +  {{6v{v_x} + \gamma v_{xxx}}}  =  - {\partial _x}(uv)
\,.\end{split}
\label {eq: Burgers–swept KdV}
\end{equation}
In the DF approach, one couples the Lagrangians in Hamilton's principle for the two separate degrees of freedom by introducing the Doppler shift which boosts the KdV wave momentum into the frame of the Burgers flow velocity.
\smallskip

In the equivalent Burgers–swept KdV system,
\begin{equation}
 {\begin{aligned}
\textbf{Gardner: \qquad} &{{v_t} + 6{vv_x} + \tfrac{3}{2}{v^2v_x} + \gamma v_{xxx} = -\,\p_x (vm)}\,,\\
\textbf{Lie Transport: \qquad} &m_t + (m\partial_x + \partial_x m) u =  0 
\,,\quad \hbox{with}\quad m := u - \tfrac12 v^2
\,,\end{aligned}} 
\label{v+ep-eqn-intro}
\end{equation}
the Gardner equation for velocity, $v$, is forced by a quantity involving momentum $m :=u-\tfrac12 v^2$ and, in turn, the quantity $m$ is Lie transported by the Burgers flow velocity, $u$.
This is the sense in which the KdV wave dynamics is `swept' in the frame of the Burgers bore. 

If initially $m(x,0)=0$ were to hold throughout the entire flow domain, then homogeneity of the transport equation in \eqref{v+ep-eqn-intro} would ensure that $m$ remains zero. In this case, the Burgers–swept KdV system would reduce to the integrable Gardner's equation for $v(x,t)$ and would exhibit the compound soliton dynamics shown in Figure \ref{Gardner-fig}.

\smallskip

However, if $m \ne 0$ initially, then solutions of the Burgers–swept KdV system may explore the rich class of nonlinear wave--current interactions beyond the integrable regime which are discussed in the remainder of the paper. 
\smallskip

Numerical simulation in section \ref{sec-3} shows how the Burgers–swept KdV system in \eqref{eq: Burgers–swept KdV} introduces compound soliton interactions and wave-current dynamics that go well beyond the behaviour observed in simpler models. Namely, a Burgers-type bore may merge with incoming compound solitons and amplify them. This nonlinear interaction leads to the formation of a new type of compound wave-current structure, in which the wave is propagating in the frame of the current, whose flow is nonlinearly influenced by the presence of the wave. Depending on the interaction dynamics, these compound solutions may either be reflected (evolving into faster waves that propagate away from the bore), or refracted (penetrating into the bore and propagating as part of its internal structure). A threshold condition is identified that distinguishes between refraction and reflection regimes in overtaking collisions of Burgers bores with KdV solitons. Fusion events are observed in which multiple solitons merge into fewer outgoing stronger compound Gardner solitons upon exceeding the refraction threshold. These findings extend the classical Gardner framework to a broader class of WCI phenomena.

\textbf{Content of the paper.}\\
In this paper, we move beyond qualitative analysis and undertake a detailed investigation of the nonlinear dynamics of the Burgers–swept KdV system \eqref{eq: Burgers–swept KdV}. Our main contributions are to: \\

1. Establish a connection between the Burgers–swept KdV system and the well-known completely integrable Gardner equation, demonstrating that the latter can be recovered as a special case of the former.\\
2. Derive the compound soliton solutions in the Burgers–swept KdV system.\\
3. Provide three distinct travelling wave solutions with a zero-speed steady profile.\\
4. Investigate wave-current interactions at the front of the Burgers bore through simulation, exploring phenomena such as reflection, refraction, and fusion.\\

\section{Properties of the Burgers-swept KdV System}\label{sec-2}
\noindent In this section, we show that the Burgers–swept KdV system \eqref{eq: Burgers–swept KdV}, modeling wave–current interactions, can be derived 
via the Dirac–Frenkel variational method for particles and fields. We then provide three classes of solutions: compound solitons, solitary waves, and periodic waves.

\subsection{Variational derivation of the Burgers–swept KdV system 
}
Taking the approach studied in \cite{HOLM2023133847}, we couple the systems by adding Burgers and KdV Lagrangians together and Doppler shifting the time derivative within the KdV contribution by the velocity $u$. This is reminiscent of the Doppler shifting of the dispersion relation studied by \cite{Pizzo_Salmon_2021}. We consider the following action integral
\begin{equation}
\begin{aligned}
	S_\text{WCI} &= \int_{t_0}^{t_1}\int_{\mathbb{R}} \frac{u^2}{2} + \frac12 v(\p_t + {u\p_x}) (\mathcal{D}^{-1}v) - v^3 +  \frac{\gamma}{2} v_x^2\,dx\,dt \,,
\end{aligned}
\label{eq: WCI_lagrangian}
\end{equation}
with the Doppler shifting $\partial_t \rightarrow (\partial_t + u\partial_x)$ and \( \mathcal{D} = -\partial_x \) as in the KdV Lagrangian
\footnote{
\noindent The KdV equation has a noncanonical Hamiltonian structure. It can be written as \( \partial_t v = \mathcal{D}\tfrac{\delta H_{\text{KdV}}}{\delta v} \), where \( \mathcal{D} = -\partial_x \) and \( \mathcal{H}_{\text{KdV}} = \int_{\mathbb{R}} ( v^3 - \tfrac{\gamma}{2} v_x^2 ) dx \). 
Just as canonical Hamiltonian systems have a phase space variational description, this noncanonical system can be deduced from the following Lagrangian
\(
	\ell_\text{KdV}(v) = \int_{\mathbb{R}} \tfrac12 v\partial_t(\mathcal{D}^{-1}v) - \mathcal{H}_\text{KdV}(v) \,dx =  \int_{\mathbb{R}} \tfrac12 v\partial_t(\mathcal{D}^{-1}v) - v^3 + \frac{\gamma}{2} v_x^2 \,dx \,.
\)
}
.
To apply Hamilton's Principle to this action, we vary the KdV variable $v$ arbitrarily and enforce the variation of the Burgers' velocity $u$ to be the Euler-Poincar\'e constrained variation $\delta u = \p_t w - {\rm ad}_uw$, with ${\rm ad}_uw=-[u,w]:=-uw_x+w u_x$. With these variations, one finds
\begin{align*}
	0 &= \delta S_\text{WCI} 
    =  \int_{t_0}^{t_1} \int_{\mathbb{R}} (\p_t  + \mathcal{L}_u)\left( u {- \frac12 v^2} \right)\, w + \left( \mathcal{D}^{-1}v_t {- uv} - 3v^2 - \gamma v_{xx} \right)\,\delta v \,dx\,dt \,,
\end{align*}
where the Lie derivative $\mathcal{L}_u$ acting on 1-form densities, which emerges by integrating ${\rm ad}_u$ by parts in $x$ to obtain ${\rm ad}^*_u$, is defined in coordinates by $\mathcal{L}_u m = {\rm ad}^*_um = m\p_x u+\p_x(mu)$ where $m=u - \tfrac12 v^2$. Since $w$ and $\delta v$ are arbitrary, the following coupled system emerges:
\begin{equation}\label{Burgers–swept KdV-EP+EL}
    {\partial_t}m + (m {\partial _x} 
    + {\partial _x}m)u = 0 \,, \quad
    {\partial_t}v 
    + {\partial _x}({uv +} {\gamma{v_{xx}} + 3{v^2}} ) = 0\,.
\end{equation}
The Hamiltonian form of the Burgers–swept KdV equations, when written in terms of the velocity variables $(u,v)$ may be expressed as a set of semidirect-product Lie-Poisson equations, extended by a generalised 2-cocycle. Namely, 
\begin{align}
 \partial_t
 \begin{bmatrix}
      u\\
      v   
 \end{bmatrix}
 = -
 \begin{bmatrix}
     \mathcal{L}_\square u & v\partial_x\square\,   \\
     \partial_x (v\square)  & \partial_x\square 
 \end{bmatrix}
 \begin{bmatrix}
      \delta{h}/\delta{u} = u\\
      \delta{h}/\delta{v} =  \gamma v_{xx} + 3 v^2
 \end{bmatrix}
\,,\label{eqn-with4}
\end{align}
where the $\square$ denotes the location of the matrix multiplication,
with Hamiltonian given by the sum of the Burgers' and KdV Hamiltonians as
\begin{align}
h(u,v)=\int \frac12 u^2 + v^3- \frac{\gamma}{2} v_x^2  \, dx
\,.\label{eqn-TotHam}
\end{align}
\begin{remark}
The Burgers–swept KdV system \eqref{eq: Burgers–swept KdV} can also be derived by transforming the KdV Lagrangian into the frame of motion of the Burgers dynamics before taking variations: 
\begin{align}
0 = \delta S = \delta \!\! \int_{t_0}^{t_1} \int_{\mathbb{R}}  \ell_{B}(u)+\ell_{K d V}(n, \nu)\,dt
=\frac12\delta\!\! \int_{t_0}^{t_1} \int_{\mathbb{R}}  u^2 
+  \nu n_x+2 n_x^3 - \gamma n_{x x}^2 \,d x \,dt \,,
\label{eqn-SumAction}
\end{align}
with Eulerian fluid variable $u=g_t g^{-1}$, KdV velocity potential $n=\phi g^{-1}$ and KdV velocity in the Burgers frame $\nu=\phi_t g^{-1}$ with constant $\gamma$. Using the variations $w  = \delta g{g^{ - 1}}$, $\xi  = \delta \phi {g^{ - 1}}$ and the induced constrained variations $\delta u = {w _t} + w u - uw ; \delta n = \xi  - nw ;\delta \nu  = {\xi _t} + \xi u - \nu w$. Thus, equation \eqref{eqn-SumAction} places the Burgers–swept KdV system into an Euler-Poincar\'e systems, as discussed in section 2 of \cite{HOLM2023133847}.
\end{remark}
\begin{remark}
For more discussion of the variational derivation of the Burgers–swept KdV system, see Appendix \ref{app-derivingBurgers–swept KdV}.
\end{remark}

\subsection{Distribution of Momentum}
The momentum in the Burgers–swept KdV system is shared between its Burgers and KdV components. The total momentum 1-form density \(m\,dx\otimes dx\) of the Lagrangian is obtained by varying the total Lagrangian with respect to vector field \(u\partial_x\), yielding  
\begin{align}
m dx\otimes dx  := \frac{\delta \ell}{\delta u} =   udx\otimes dx - \frac{1}{2} v^2 dx\otimes dx,
\label{Total-BKdV-mom}
\end{align}
which is the difference between the Burgers and KdV momentum 1-form densities. We can rewrite the Hamiltonian with a diagonalised Poisson matrix with ${m}=u-\frac12 v^2$.
\begin{remark}
The Burgers–swept KdV system can be equivalently expressed in terms of ${m} = u- \frac12 v^2$ with Hamiltonian
\begin{equation}
    h({m} ,v) = \int_{\mathbb{R}} {\frac{1}{2} ({m}+\frac12 v^2)^2 + {v^3} - \frac{\gamma}{2}v_x^2 \,dx} \,.
\end{equation}
Consequently, the Burgers–swept KdV system \eqref{eq: Burgers–swept KdV} or, equivalently, \eqref{v+ep-eqn-intro} may also be written as
\begin{equation}
\partial_t \begin{bmatrix}
      m\\
      v   
 \end{bmatrix}
 =-\left[\begin{array}{cc}
\mathcal{L}_\square m & 0 \\
0 & \partial_x
\end{array}\right]
\left[\begin{aligned}
&\delta h / \delta {m}={m} + v^2 / 2 = u \\
&\delta h / \delta v=\gamma v_{x x}+3 v^2+v^3 / 2+v {m}
\end{aligned}\right].
\label{LP-brkt}
\end{equation}
\end{remark}

The $v$-equation here is Gardner's equation \citep{Gardner1967} with an extra coupling term $(m v)_x$. The $m$-equation is expressed in the geometric form, in terms of the coadjoint representation of vector fields on their dual, as a Lie-transport equation with $u=\frac12 v^2 + m$:
\begin{equation}
    {m}_t + (m\partial_x + \partial_x m) {u} 
    = {m}_t + \mathcal{L}_{u} {m}
    =0
\,.\label{v+ep-eqn2}
\end{equation}
The momentum density $m$ is advected by the flow of $u$, carried by the Burgers velocity field $u \partial_x$ as a 1-form density $m d x^2$. This transport is expressed as the Lie derivative $\mathcal{L}_u\left(m d x^2\right)$.

\begin{proposition}
\label{prop: m=0}
    In the Burgers–swept KdV system \eqref{eq: Burgers–swept KdV}, if initially ${m}(x,0)=u(x,0)-\tfrac12 v^2(x,0)=0$, then ${m}(x,t)=0$ for all time. Thereafter, the Burgers–swept KdV system reduces to Gardner's equation.
\end{proposition}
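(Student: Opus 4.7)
The plan is to exploit the fact that the Lie-transport equation \eqref{v+ep-eqn2} for $m$ is strictly \emph{linear and homogeneous} in $m$, so that the zero solution is preserved along the Burgers characteristics. Concretely, I would first expand the coadjoint action to bring the transport equation into non-conservative form,
\begin{equation*}
    m_t + \mathcal{L}_u m = m_t + (mu)_x + m u_x = m_t + u\, m_x + 2 m\, u_x = 0\,,
\end{equation*}
which exhibits $m=0$ as an obvious solution for any given $u$.

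Next I would run the characteristics argument. Introducing the flow $\dot{x}(t) = u(x(t),t)$ generated by the Burgers velocity, the PDE above reduces along each characteristic to the linear ODE $\dot m = -2 m u_x$, whose solution is $m(x(t),t) = m(x(0),0)\exp\!\bigl(-2\int_0^t u_x(x(s),s)\,ds\bigr)$. If $m(\cdot,0)\equiv 0$, then the right-hand side vanishes identically along every characteristic, and since the characteristics foliate the domain for smooth $u$ we conclude $m(x,t)\equiv 0$. As a backup (should one prefer a PDE-level statement independent of characteristics) the same conclusion follows from the energy estimate
\begin{equation*}
    \tfrac{1}{2}\tfrac{d}{dt}\!\int m^2\,dx = -\tfrac{3}{2}\!\int m^2 u_x\,dx \;\le\; \tfrac{3}{2}\|u_x\|_{L^\infty}\!\int m^2\,dx\,,
\end{equation*}
followed by Gr\"onwall's inequality, which propagates $\|m(\cdot,0)\|_{L^2}=0$ for all $t$.

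With $m\equiv 0$ established, the relation $u = \tfrac12 v^2 + m$ forces $u=\tfrac12 v^2$, and substituting this into the $v$-equation from \eqref{LP-brkt} (or equivalently into \eqref{v+ep-eqn-intro}) eliminates the coupling term $\partial_x(vm)$, leaving precisely the Gardner equation
\begin{equation*}
    v_t + 6 v v_x + \tfrac{3}{2} v^2 v_x + \gamma v_{xxx} = 0\,,
\end{equation*}
as claimed.

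The main subtlety is not the algebra but the regularity assumption: the characteristic/Gr\"onwall argument requires $u_x$ to remain bounded (or at least locally integrable in time in an appropriate sense) over the time interval under consideration, which in turn demands enough smoothness of $v$ via $u=\tfrac{1}{2}v^2$. For classical solutions on the lifespan where the coupled system is well-posed this is automatic, and I would simply state the conclusion in that regime; a more refined functional-analytic treatment would be needed only if one wished to track the result up to a shock-formation time for the Burgers component.
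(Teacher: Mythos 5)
Your proof is correct and rests on the same idea as the paper's, which disposes of the proposition in a single sentence by noting that the Lie-transport equation $m_t + \mathcal{L}_u m = 0$ is linear and homogeneous in $m$, so the zero solution persists and the coupling term $\partial_x(vm)$ drops from the $v$-equation, leaving Gardner's equation. You simply make this one-line homogeneity argument rigorous --- via characteristics ($\dot m = -2mu_x$ along $\dot x = u$) and, alternatively, the $L^2$ estimate with Gr\"onwall, both of which are computed correctly --- and your closing caveat about boundedness of $u_x$ on the classical lifespan is a reasonable refinement the paper leaves implicit.
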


\subsection{Compound Soliton and Travelling Wave Solutions}
\noindent Travelling waves in the Burgers–swept KdV system reflect the balance between nonlinear advection and dispersion in $u$ and $v$. Using the Galilean-invariant coordinate $\xi=x-c t$, we set $u(t, x)=u(\xi), v(t, x)=v(\xi)$. Substituting this Ansatz into \eqref{eq: Burgers–swept KdV} reduces the system to
\begin{equation}
 - cu' + 3uu' =  - v(\gamma v'' + 3{v^2})' \,,\quad\hbox{and}\quad
 - cv' + (uv)' =  - (\gamma v'' + 3{v^2})' \,.
\label{eq: Burgers–swept KdV travelling wave}
\end{equation}
The travelling wave solution has the following relation between $u$ and $v$.
\begin{lemma}[Burgers–swept KdV travelling wave condition]
\label{lemma:1}
Let $u(\xi)$ and $v(\xi)$ be travelling wave solutions for \eqref{eq: Burgers–swept KdV travelling wave}, with boundary conditions $\mathop {\lim }\limits_{\xi  \to  \pm \infty } u(\xi) = {u_0}$ and $\mathop {\lim }\limits_{\xi  \to  \pm \infty } v(\xi) = 0$, then:
   \begin{equation}
   \label{eq: travelling_wave_condition}
    {(u - c)^2}(u - \tfrac{1}{2}{v^2}) = (u_0-c)^2u_0 \,.
\end{equation}
\end{lemma}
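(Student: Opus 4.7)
The plan is to bypass the coupled ODE system \eqref{eq: Burgers–swept KdV travelling wave} and instead exploit the Lie-transport form of the $m$-equation from \eqref{v+ep-eqn2}, with $m := u - \tfrac12 v^2$. Since the relation to be proved is exactly a statement about $m$, namely $(u-c)^2 m = (u_0-c)^2 u_0$, it is natural to seek it as a first integral of the transport equation rather than fight through the dispersive $v$-equation.

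First I would take the transport law $m_t + \mathcal{L}_u m = m_t + u m_x + 2 m u_x = 0$ and substitute the travelling wave Ansatz $m(x,t) = m(\xi)$, $u(x,t) = u(\xi)$, $\xi = x - ct$. Time derivatives become $-c$ times spatial derivatives, reducing the PDE to the clean first-order ODE
\begin{equation*}
(u - c)\, m' + 2 m\, u' = 0 \,.
\end{equation*}
Wherever $m(u-c) \ne 0$, this is separable, giving $(\ln|m|)' = -2(\ln|u-c|)'$, which integrates immediately to $(u-c)^2 m = \text{const}$. Evaluating the constant using the boundary data $v \to 0$ and $u \to u_0$ at $\xi \to \pm\infty$ yields $m \to u_0$ and $u - c \to u_0 - c$, so the constant is $(u_0 - c)^2 u_0$; substituting $m = u - \tfrac12 v^2$ then gives \eqref{eq: travelling_wave_condition}.

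If one prefers to work directly from \eqref{eq: Burgers–swept KdV travelling wave} without invoking \eqref{v+ep-eqn2}, the same result follows by first integrating the second equation once in $\xi$ and using the boundary conditions to fix the constant of integration as $\gamma v'' + 3 v^2 = (c-u) v$; substituting this into the first equation and simplifying reproduces precisely the ODE $(u-c) m' + 2 m u' = 0$, which one can also check is exactly $\frac{d}{d\xi}\bigl[(u-c)^2(u - \tfrac12 v^2)\bigr] = 0$ after expansion.

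The main obstacle is essentially cosmetic rather than mathematical: the separation of variables requires $m(u-c) \ne 0$, so in principle one should justify that the conservation law persists through any isolated zero. This is handled by continuity, since $(u-c)^2 m$ is a smooth function of $\xi$ that is locally constant on the open set where the factor is nonzero and takes the common value $(u_0-c)^2 u_0$ at both ends. Apart from this minor point, the proof is a one-line integration, and the Lie-transport viewpoint makes it transparent why the specific combination $(u-c)^2 m$ is the conserved quantity: it is the travelling wave reduction of the advection of the 1-form density $m\,dx\otimes dx$ by the vector field $u\partial_x$.
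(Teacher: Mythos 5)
Your proof is correct, and it follows precisely the route that the paper's structure suggests: the paper states Lemma \ref{lemma:1} without printing a proof, but the surrounding development (the Lie-transport equation \eqref{v+ep-eqn2} for $m = u - \tfrac12 v^2$) makes clear that the intended argument is exactly the travelling-wave reduction $(u-c)\,m' + 2m\,u' = 0$ that you derive, and your algebra checks out: this ODE is indeed an exact consequence of the two equations in \eqref{eq: Burgers–swept KdV travelling wave}, and it integrates to $(u-c)^2 m = (u_0-c)^2 u_0$. Two refinements are worth noting. First, your concern about zeros of $m(u-c)$ can be dispensed with entirely, because
\begin{equation*}
\frac{d}{d\xi}\left[(u-c)^2\left(u - \tfrac12 v^2\right)\right] \;=\; (u-c)\left[(u-c)\,m' + 2m\,u'\right] \,,
\end{equation*}
so the conserved quantity has identically vanishing derivative wherever the ODE holds; no separation of variables, and hence no continuity patch across isolated zeros, is needed. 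Second, in your direct verification from \eqref{eq: Burgers–swept KdV travelling wave}, fixing the integration constant in $\gamma v'' + 3v^2 = (c-u)v$ tacitly uses decay of $v'$ and $v''$ at infinity, which the lemma's hypotheses do not state (they only give limits of $u$ and $v$); this is avoided by instead eliminating the common term $(\gamma v'' + 3v^2)'$ between the two equations --- multiply the $v$-equation by $v$ and subtract it from the $u$-equation --- which yields $u'(3u - c - v^2) = (u-c)v v'$, i.e.\ $(u-c)m' + 2mu' = 0$, exactly and without any integration constant, so that the boundary data enter only once, at the very end, to evaluate the constant as $(u_0-c)^2 u_0$.
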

For solitons, where the solution and its derivatives vanish as $\xi \rightarrow \pm \infty$, we have $u_0=0$. This implies $u= \frac{1}{2} v^2$ and $m=0$, reducing \eqref{Burgers–swept KdV-EP+EL}
to an integrable form that yields the following explicit compound soliton solution.
\begin{theorem}[Compound Soliton Solution]
\label{thm:1}
The Burgers-swept KdV system has the following compound soliton solution in a closed form, for constant speed $c$ and $\xi = \frac{(x - ct)}{\sqrt{\gamma}}$:
\begin{equation}
\begin{split}
v(\xi ) &= \frac{{c}}{{{w(\xi)^2} + 1}}\,{{{\mathop{\rm sech}\nolimits} }^2}( \pm \frac{\sqrt c}{2} \xi  + \varphi )\,,\\
u(\xi)&=\frac{{c^2}}{{2({w(\xi)^2} + 1)^2}}{{{\mathop{\rm sech}\nolimits} }^4}( \pm \frac{\sqrt c}{2} \xi  + \varphi ) \,,
\end{split}
\label{Gardner-redux}
\end{equation}
where $w(\xi ) = \frac{{\sqrt c }}{2}\tanh \left( { \pm \frac{{\sqrt c }}{2}\xi  + \varphi } \right) + \frac{{\sqrt {c + 4} }}{2}$ and $\varphi$ is the phase that specifically satisfies: \\
1. Locality:  $u, u', v, v' \rightarrow 0$ as $\xi \rightarrow \pm \infty$,\\
2. Stability: the shapes of $u$ and $v$ persist,\\
3. Interaction Properties: Interaction between compound solitons 
preserves velocities, producing only a phase shift in the wave trajectories.
\end{theorem}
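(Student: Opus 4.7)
The plan is to exploit the soliton boundary conditions to collapse the coupled system onto the integrable Gardner equation, quadrature the resulting travelling-wave ODE, and then read off the three asserted properties.

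\textbf{Step 1 (Reduction to Gardner).} First I invoke Lemma \ref{lemma:1} with $u_0 = 0$: the soliton boundary conditions $u,v \to 0$ at $\pm\infty$ make \eqref{eq: travelling_wave_condition} read $(u-c)^2\bigl(u - \tfrac{1}{2} v^2\bigr) = 0$. Since $u - c$ cannot vanish identically on a non-trivial profile, this forces $u = \tfrac{1}{2}v^2$, i.e.\ $m \equiv 0$. Proposition \ref{prop: m=0} then collapses \eqref{eq: Burgers–swept KdV} to Gardner's equation for $v$, with $u$ recovered algebraically from $u = v^2/2$.

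\textbf{Step 2 (Travelling-wave quadrature).} Writing $v = v(\xi)$ with $\xi = (x-ct)/\sqrt{\gamma}$ and using the decay conditions to kill integration constants, one integration of Gardner's equation gives $v_{\xi\xi} = cv - 3v^2 - \tfrac{1}{2} v^3$; multiplying by $v_\xi$ and integrating a second time yields the first integral
\begin{equation*}
(v_\xi)^2 \;=\; v^2\bigl(c - 2v - \tfrac{1}{4}v^2\bigr).
\end{equation*}
The substitution $v = 1/f$ converts this into the constant-coefficient equation $(f_\xi)^2 = c\bigl(f - 1/c\bigr)^2 - (c+4)/(4c)$. The hyperbolic Ansatz $f - 1/c = \bigl(\sqrt{c+4}/(2c)\bigr)\cosh\phi$ then forces $\phi_\xi = \pm\sqrt{c}$, producing the classical Gardner soliton
\begin{equation*}
v(\xi) \;=\; \frac{c}{1 + \tfrac{\sqrt{c+4}}{2}\cosh\bigl(\pm\sqrt{c}\,\xi + \text{const}\bigr)}.
\end{equation*}
The packaged form \eqref{Gardner-redux} is recovered by the half-angle substitution $\theta = \pm\tfrac{\sqrt{c}}{2}\xi + \varphi$, the identities $\cosh(2\theta) = 2\cosh^2\theta - 1$ and $\text{sech}^2\theta = 1 - \tanh^2\theta$, and routine algebraic collection of the resulting $\tanh\theta$ and $\text{sech}^2\theta$ terms into the quantity $w^2 + 1$. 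The identity $u = v^2/2$ then immediately yields the stated $\text{sech}^4$ expression for $u$.

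\textbf{Step 3 (Properties) and main obstacle.} Locality follows at once from the exponential decay of $\text{sech}$ combined with the boundedness of $w$, so $u, u', v, v' \to 0$ as $\xi \to \pm\infty$. Persistence of shape is built into the travelling-wave Ansatz and the exact solubility of the reduced ODE. The elastic interaction property, in which compound solitons emerge from collisions with unchanged speeds and suffer only a phase shift, is inherited from the integrability of Gardner's equation (e.g.\ via its Miura-type link to mKdV and the associated $N$-soliton formulas); because $m \equiv 0$ is preserved by the coupled dynamics (Proposition \ref{prop: m=0}), these multi-soliton interactions carry over without modification to the Burgers-swept KdV system, with $u = v^2/2$ inheriting the same elastic collision behaviour. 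The main technical hurdle is the algebraic rearrangement in Step 2 that converts the textbook $1/(1 + B\cosh)$ form of the Gardner soliton into the $\text{sech}^2\theta/(w^2+1)$ packaging of \eqref{Gardner-redux}; the reduction, the quadrature and the verification of the three properties are otherwise direct consequences of previously established lemmata and standard facts about the Gardner equation.
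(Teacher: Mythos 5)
Your proposal is correct and follows essentially the same route as the paper: the paper's proof likewise uses Lemma \ref{lemma:1} with $u_0=0$ to force $m=u-\tfrac12 v^2=0$, invokes Proposition \ref{prop: m=0} to reduce the system to Gardner's equation, and then identifies \eqref{Gardner-redux} with the Gardner soliton. Your Steps 2--3 in fact go further than the paper, which simply cites the Gardner soliton: your quadrature $(v_\xi)^2 = v^2\bigl(c - 2v - \tfrac14 v^2\bigr)$, the substitution $v=1/f$, and the half-angle rearrangement showing $(w^2+1)\cosh^2\theta = 1 + \tfrac{\sqrt{c+4}}{2}\cosh(2\theta+\delta)$ are all verifiably correct and supply the explicit derivation the paper leaves implicit.
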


\begin{proof}
By Proposition \ref{prop: m=0}, the solution satisfies the Burgers–swept KdV equations under the given boundary conditions, with ${m}=u-\frac{1}{2} v^2=0$ invariant during evolution by the advection equation \eqref{v+ep-eqn2}. For $m=0$, the Burgers–swept KdV system reduces to the Gardner equation, whose soliton solution corresponds to the $v(\xi)$ profile satisfying the properties of a soliton solution.
\end{proof}
The compound soliton in \eqref{Gardner-redux} shares a similar profile with the classic KdV soliton, \(v_\text{KdV}(\xi) = \frac{c}{2} \operatorname{sech}^2(\pm \frac{\sqrt{c}}{2} \xi + \varphi)\) but modulated by an interaction factor $w$. The factor $w \geq 1$ reduces the amplitude of $v$ in compound solitons relative to KdV solitons at the same speed $c$, with positive $u$ aiding the propagation. The dispersive coefficient $\gamma$ scales the soliton dynamics in space and time. When $u_0 > 0$, we have the following result of a compound solitary wave solution.

\begin{remark}
When $c=1$, the interaction factor $w(\xi)$ ranges within $\left(\frac{1}{\varphi}, \varphi\right)$, where $\varphi=\frac{1+\sqrt{5}}{2} \approx 1.618$ denotes the Golden Ratio.
\end{remark}

\begin{remark}[Compound Solitary Wave Solution with $u_0>0$]
The Burgers-swept KdV system admits a compound solitary wave solution $u(\xi), v(\xi)$ propagating with constant speed $c$ with asymptotic condition $\underset{\xi \rightarrow \pm \infty}{\lim} u(\xi)=u_0>0, \quad \underset{\xi \rightarrow \pm \infty}{\lim} v(\xi)=0$. The solution satisfies the following implicit relation given by an ODE: 
\begin{equation}
    (u(v)+3v-c) v + v_{xx} =\text{const} \,,  
    \label{eq: solitary_wave}
\end{equation}
    where $\xi = x-ct$ and $u(v)$ is a fixed solution of the cubic equation \eqref{eq: travelling_wave_condition} in Lemma \ref{lemma:1}.
\end{remark}
The wave profile can be computed numerically, revealing its structure (Figure \ref{fig:soliton_comparison}). 
\begin{figure}[H]%
    \centering
    \begin{subfigure}[b]{0.48\textwidth}
        \includegraphics[width=\linewidth]{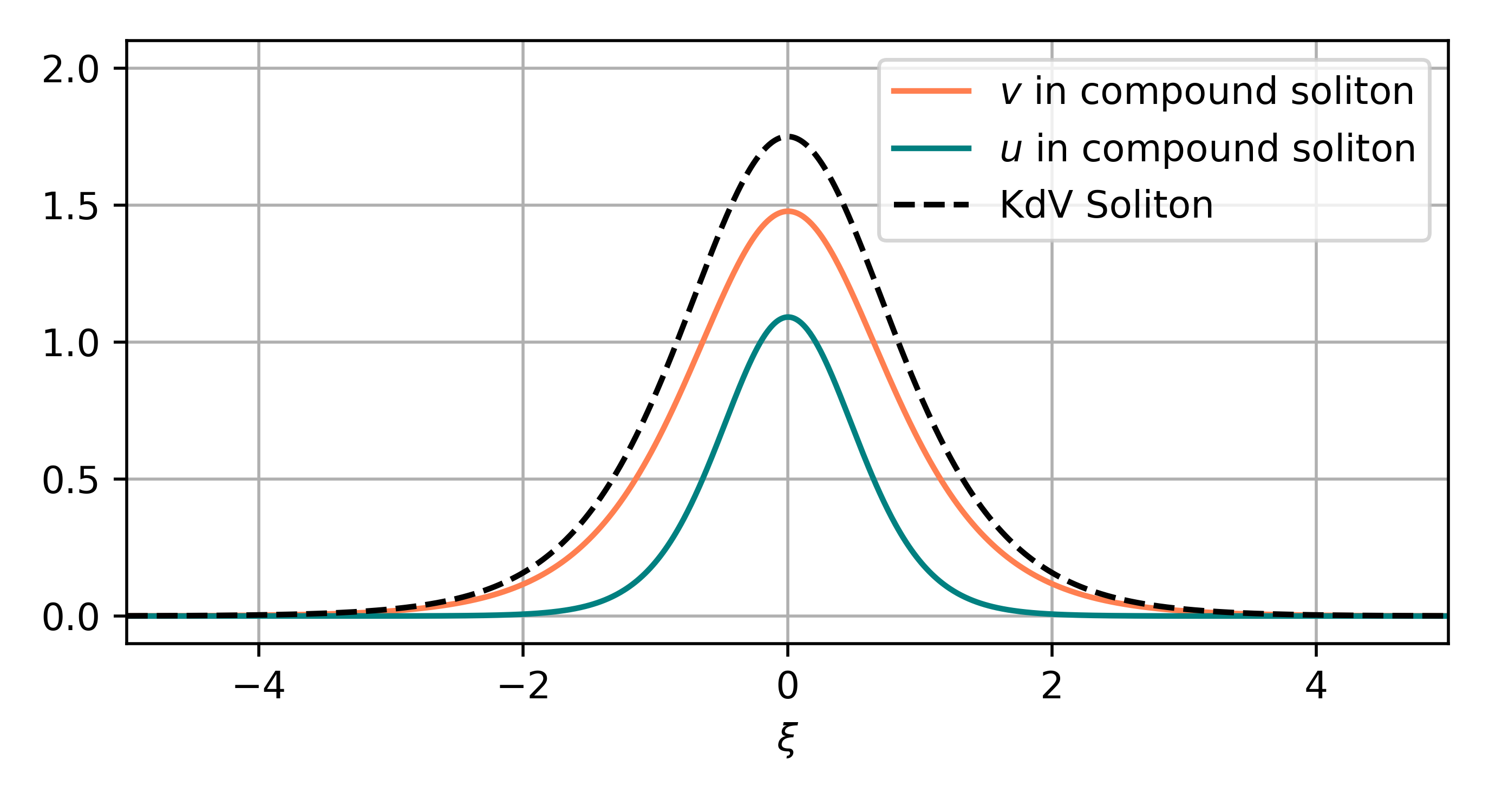}
    \end{subfigure}
    \hfill
    \begin{subfigure}[b]{0.48\textwidth}
        \includegraphics[width=\linewidth]{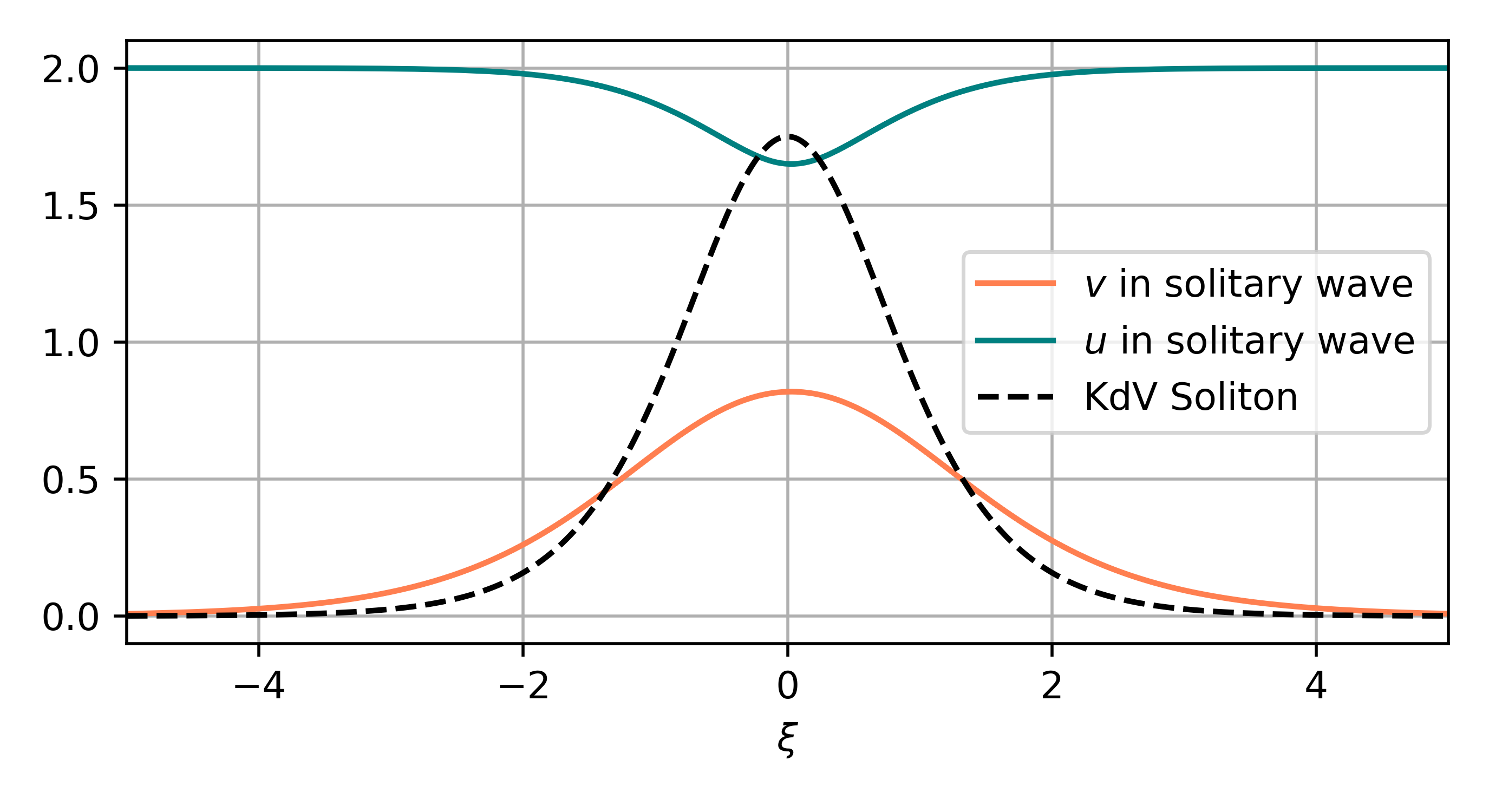}
    \end{subfigure}
    \caption{Comparison between the Burgers–swept KdV compound soliton (left) / solitary wave (right) and the KdV soliton $v_\text{KdV}(\xi)$ at the same wave speed $c=3.5$, under two boundary conditions $\lim _{x \rightarrow \pm \infty} u(x)=u_0$: $u_0=0$ (left) for \ref{Gardner-redux} and $u_0=2$ (right) for \ref{eq: solitary_wave}. 
}
    \label{fig:soliton_comparison}
\end{figure}
The model also has the following three solutions.
\begin{figure}[H]
    \centering
    \includegraphics[width=1\linewidth]{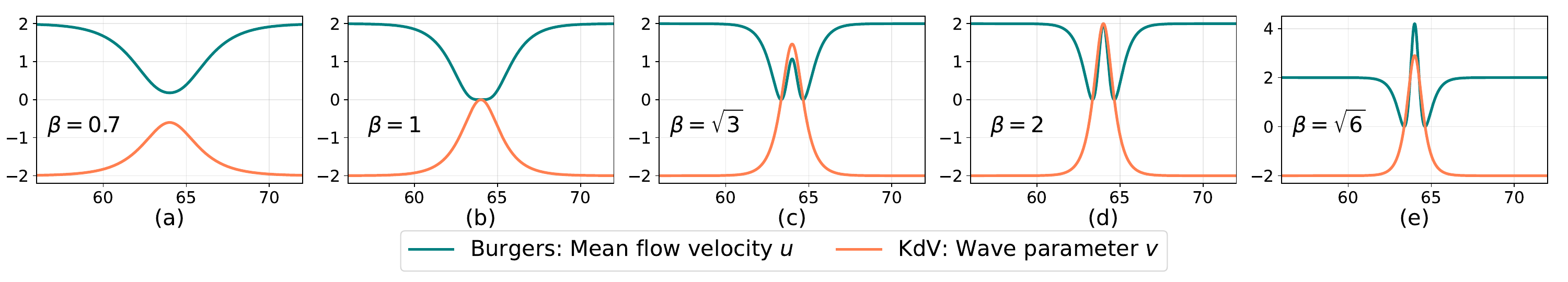}
    \caption{$\operatorname{dn}[\beta \xi, m]$ travelling wave solution profiles in the Burgers–swept KdV system with elliptic parameter  $m=1$. The Burgers component $u$ (mean flow velocity, blue) and KdV component $v$ (wave parameter, orange) are plotted for increasing values of the wave parameter $\beta$ (a) $\beta=0.7$, (b) $\beta=1$, (c) $\beta=\sqrt3$, when $\beta>1$ a W-shape structure appears in $u$  (d) $\beta=2$, (e) $\beta=\sqrt6$ is corresponding to the case $c=0$ (\textit{no propagation}), the wave profile appears stationary, with a highly localized, peaked structure. This profile resembles the \textit{Peregrine soliton}, a localized wave in the nonlinear Schrödinger equation.}
    \label{fig: alternative_soliton}
\end{figure}
\begin{theorem}[Periodic wave]\label{Thm2.2}
    Burgers-swept KdV system \eqref{eq: Burgers–swept KdV travelling wave} admits the following periodic solutions with relation $u=v^2/2$, \[
\begin{aligned}
    v_1(\xi) &= -2 + 4\beta^2 \operatorname{dn}\left[\beta \xi, m\right], \\
    v_2(\xi) &= -2 + 4\beta^2 \sqrt{m} \operatorname{cn}\left[\beta \xi, m\right], \\
    v_3(\xi) &= -2 + 2\beta^2 \operatorname{dn}\left[\beta \xi, m\right] \pm 2\beta^2 \sqrt{m} \operatorname{cn}\left[\beta \xi, m\right].
\end{aligned}
\]
In these periodic solutions $\xi = x-ct+\varphi$, with speed $c$; arbitrary phase $\varphi$; elliptic Jacobi function parameter $m$; and amplitude factor $\beta^2 = {(c + 6})/({2 - m)}$. 
\end{theorem}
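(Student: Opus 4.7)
The plan is to invoke Proposition \ref{prop: m=0}: the relation $u = v^2/2$ asserted in the theorem forces the momentum $m = u - \tfrac12 v^2$ to vanish identically, so the Burgers--swept KdV system \eqref{eq: Burgers–swept KdV} collapses to the Gardner equation
\begin{equation*}
v_t + 6 v v_x + \tfrac{3}{2} v^2 v_x + \gamma v_{xxx} = 0 .
\end{equation*}
Substituting the travelling-wave ansatz $v = v(\xi)$ with $\xi = x - c t + \varphi$ and integrating once in $\xi$, the Gardner equation becomes the second-order ODE
\begin{equation*}
\gamma v'' + \tfrac{1}{2} v^3 + 3 v^2 - c v = C_1 ,
\end{equation*}
for an integration constant $C_1$.

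The key algebraic move is the shift $v = V - 2$, uniquely designed to annihilate the quadratic term (the offset $-2$ is forced by the ratio of the cubic and quadratic coefficients in the ODE above). Expanding and collecting yields the defocusing modified-KdV-type travelling-wave equation
\begin{equation*}
\gamma V'' + \tfrac{1}{2} V^3 - (c+6) V = C_1' ,
\end{equation*}
whose real bounded solutions are classically parametrised by Jacobi elliptic functions.

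I would then verify each proposed $v_i$ by direct substitution, using the second-derivative identities
\begin{equation*}
\operatorname{dn}''(\zeta, m) = (2-m)\operatorname{dn} - 2 \operatorname{dn}^3 , \qquad \operatorname{cn}''(\zeta, m) = (2m-1)\operatorname{cn} - 2 m \operatorname{cn}^3
\end{equation*}
with $\zeta = \beta \xi$. For $v_1$ and $v_2$ the substitution produces only pure linear and cubic powers of the elliptic function, and matching these two coefficients simultaneously pins down both the amplitude prefactor and the $\beta$--$c$--$m$ relation of the stated form $\beta^2 \propto (c+6)/(2-m)$ (respectively the analogous cn-branch version).

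The main obstacle is the superposed ansatz $v_3 \propto \operatorname{dn} \pm \sqrt{m}\operatorname{cn}$, since the cube of the sum generates mixed terms $\operatorname{dn}^2 \operatorname{cn}$ and $\operatorname{dn}\,\operatorname{cn}^2$ that do not live in the linear--cubic basis used for the first two cases. The decisive simplification uses the Pythagorean identities $\operatorname{sn}^2 + \operatorname{cn}^2 = 1$ and $\operatorname{dn}^2 + m\,\operatorname{sn}^2 = 1$, from which
\begin{equation*}
\operatorname{dn}^2\operatorname{cn} = (1-m)\operatorname{cn} + m \operatorname{cn}^3 , \qquad m\,\operatorname{dn}\,\operatorname{cn}^2 = (m-1)\operatorname{dn} + \operatorname{dn}^3 .
\end{equation*}
Applying these collapses the residual onto the four-dimensional basis $\{\operatorname{dn}, \operatorname{dn}^3, \operatorname{cn}, \operatorname{cn}^3\}$, and matching its four coefficients forces equality of the dn- and $\sqrt{m}\operatorname{cn}$-prefactors (thereby producing the $\pm$ sign choice) and fixes the remaining amplitude and $\beta$-relation. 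As a final consistency check, the solitonic limit $m \to 1$ merges $\operatorname{dn}$ and $\operatorname{cn}$ into a common $\operatorname{sech}$ and all three families degenerate into the compound soliton of Theorem \ref{thm:1}.
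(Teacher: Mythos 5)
The paper states Theorem \ref{Thm2.2} \emph{without any proof} (it is only illustrated by Figure \ref{fig: alternative_soliton}, drawn at $m=1$), so your proposal can only be judged by direct verification---and your route is indeed the natural one: with $u=v^2/2$ both equations of \eqref{eq: Burgers–swept KdV travelling wave} collapse to the same integrated Gardner ODE $\gamma v'' + \tfrac12 v^3 + 3v^2 - cv = C_1$, the shift $v=V-2$ does annihilate the quadratic term, and your elliptic identities (including the $\operatorname{dn}^2\operatorname{cn}$ and $\operatorname{dn}\operatorname{cn}^2$ reductions needed for $v_3$) are all correct. The genuine gap is that you defer the decisive coefficient-matching and then simply assert that it ``pins down \dots\ the stated form.'' It does not. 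Carrying it out (take $\gamma=1$, as the stated relation implicitly requires): for $V=A\operatorname{dn}(\beta\xi,m)$ the $\operatorname{dn}^3$ balance forces $A^2=4\gamma\beta^2$, i.e.\ $A=2\beta$, \emph{not} $4\beta^2$, while the $\operatorname{dn}$ balance gives $\beta^2=(c+6)/(2-m)$; for the cn branch one finds $A=2\sqrt{m}\,\beta$ with $\beta^2=(c+6)/(2m-1)$; and for the superposition $\operatorname{dn}\pm\sqrt{m}\operatorname{cn}$, equality of the two prefactors (which you correctly anticipate) yields per-term amplitude $\beta$ with $\beta^2=2(c+6)/(1+m)$. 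Hence the three families \emph{cannot} share the single relation $\beta^2=(c+6)/(2-m)$---the three relations coincide only at $m=1$, which is exactly the case plotted in Figure \ref{fig: alternative_soliton} (e.g.\ $\beta=\sqrt6\leftrightarrow c=0$)---and the stated prefactors $4\beta^2$, $4\beta^2\sqrt{m}$, $2\beta^2$ solve the ODE only at isolated values of $\beta$ (e.g.\ $4\beta^2=2\beta$ forces $\beta=\tfrac12$). A concrete failure: at $m=1$, $c=0$, $\beta=\sqrt6$, the stated $v_1=-2+24\operatorname{sech}(\sqrt6\,\xi)$ leaves a residual $6624\operatorname{sech}^3(\sqrt6\,\xi)$ in the integrated ODE, whereas $-2+2\sqrt6\operatorname{sech}(\sqrt6\,\xi)$ satisfies it exactly. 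So your method, executed honestly, proves a \emph{corrected} theorem but refutes the displayed one; your parenthetical ``(respectively the analogous cn-branch version)'' already betrays this tension for $v_2$ without resolving it, and a valid write-up must state the corrected amplitudes and the branch-dependent $\beta$--$c$--$m$ relations explicitly.

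Your closing consistency check is also false: as $m\to1$ all three profiles tend to $-2+\text{const}\cdot\operatorname{sech}(\beta\xi)$, a sech pulse on the nonzero background $v=-2$ (so $u\to2$ at infinity), whereas the compound soliton of Theorem \ref{thm:1} is a modulated $\operatorname{sech}^2$ profile decaying to $v=0$, $u=0$. These are distinct travelling-wave branches of the Gardner ODE, selected by different boundary values (different integration constants $C_1$), and no choice of phase or parameters maps one onto the other, so the limit $m\to1$ cannot serve as a check against Theorem \ref{thm:1}; the correct degenerate statement is merely that $v_1$, $v_2$, $v_3$ all merge into one sech family on the pedestal $v=-2$.
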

Figure \ref{fig: alternative_soliton} illustrates the periodic wave solutions, with $\beta=\sqrt6$ serving as a special case where the wave does not propagate $(c=0)$ but remains stable and stationary.  

\section{Simulations of the Burgers–swept KdV system}\label{sec-3}
The simulations below are performed using the pseudo-spectral method supported within the ``Dedalus Project" \citep{2020PhRvR...2b3068B}. The Burgers–swept KdV system contains a Burgers-type equation with a shock wave type solution, and a viscosity term $\eta u_{xx}$ with $\eta =0.02$ is added to the Burgers part of the equation \eqref{eq: Burgers–swept KdV} as $u_t + uu_x = \eta u_{xx} -v{\partial _x}(  3{v^2} + {\gamma{v_{xx}}} )$ to form a bore front.
\subsection{The emergent behavior of Burgers–swept KdV Compound soliton }
Burgers–swept KdV solitary waves extend KdV type solitons by coupling dynamics in $u$ and $v$. Even with nonzero initial data in $v$ alone ($u=0$), the system evolves to form Burgers–swept KdV compound solitons as shown in Figure~\ref{fig: kdv_emerge}. That is for $m \neq 0$ initially, the system tends to self-organize into localized regions where $m \rightarrow 0$ and restores Gardner-type dynamics and integrability, allowing compound solitons to emerge spontaneously. The train of compound solitons suggests a fruitful future study by using the inverse scattering transform on this type of wave-current interaction model.

\begin{figure}[H]
    \centering
    \includegraphics[width=\linewidth]{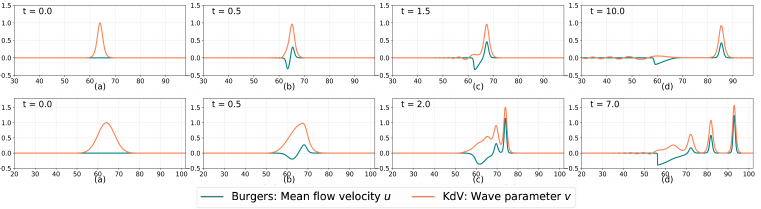}
    \caption{Formation of Burgers–swept KdV compound solitons from $m\ne0$ initializations. 
    (a) Initial condition with $u=0$ and a KdV soliton $v= v_{\mathrm{KdV}}$ with $c=2$ (top), Gaussian packet $v=\exp (-x^2 / 32)$ (bottom). (b) The Burgers component develops an N-wave structure.
    (c) The right-propagating peak in $u$ aligns with the part in $v$, initiating compound structure formation. (d) A single (top)/ train (bottom) of compound solitons emerges and travels to the right, leaving residual waves to the left. For spacetime plot see appendix \ref{appendix: spacetime}.
}
    \label{fig: kdv_emerge}
\end{figure}

\subsection{Interaction between a Burgers Bore and a Compound Soliton}

\textbf{Refraction. }
Figure \ref{fig: refraction} illustrates a refraction-type interaction in the Burgers–swept KdV system, where a compound solitary wave crosses the interface between two distinct wave propagation environments: outside the Burgers current, the stable structure is a compound soliton, while inside the current, it takes the form of a solitary wave. As the wave exits or enters the Burgers current (i.e., crosses the bore front), it transitions between these two solution types. This transition leaves behind a train of residual $v$-waves in the current region, which, in the frame of the bore, resembles a Gardner-type wave train \citep{KI2012}. 
\begin{figure}[H]
    \centering
    \includegraphics[width= \linewidth]{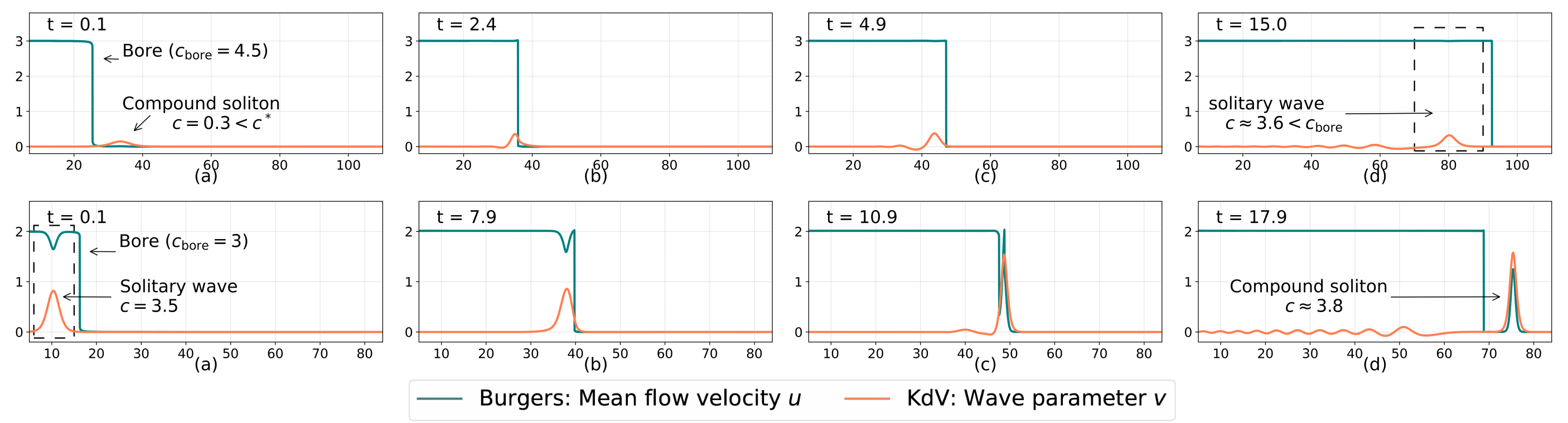}
    \caption{Refraction type interaction between bore and a compound wave. For spacetime plot see appendix \ref{appendix: spacetime}.\\
    \textbf{Top}: a compound soliton satisfying \ref{Gardner-redux} swept by the Burgers current and forms a solitary wave slower than the bore. (a) a compound soliton ($c=0.3$) out of the Burgers current ($u_0=3$). (b) - (c) the bore takeovers the soliton (d) a solitary wave remains in the current. \\
    \textbf{Bottom}: a compound solitary wave satisfying \ref{eq: solitary_wave} in the Burgers current exits current and forms a faster compound soliton. (a) a compound solitary wave  with speed $c=3.5$ within a Burgers current ($u_0=2$) (bore front speed $c=1.5u_0$). The solitary wave approach the bore front at speed $c=0.5$ without much shape deformation. (b) the solitary wave arrives the bore front and starts to detach the bore, (c) the solitary wave carries away a piece of the bore and start to leave the bore front, (d) a compound soliton faster than the bore is formed with a `train' of waves carried along in the `current' behind the bore front.}
    \label{fig: refraction}
\end{figure}

The key feature of this refraction phenomenon lies in the change of propagation speed relative to the bore front (represented by the vertical line in figure \ref{fig:phase1} (left)). When viewed in the moving frame of the front, the soliton’s velocity changes across the interface, analogous to the classical refraction. Faster solitary waves inside the current may eventually escape and emerge ahead of the bore as compound solitons, while slower compound solitons outside the current are refracted and fall behind the bore under a threshold discussed in the next section. 

\textbf{Reflection. }
When the compound soliton speed is above a certain threshold, the main profile will not be `refracted' by the bore but will be `reflected' and carry away part of the bore to create a compound soliton which leaves the bore. 
\begin{figure}[H]
    \centering
    \includegraphics[width=\linewidth]{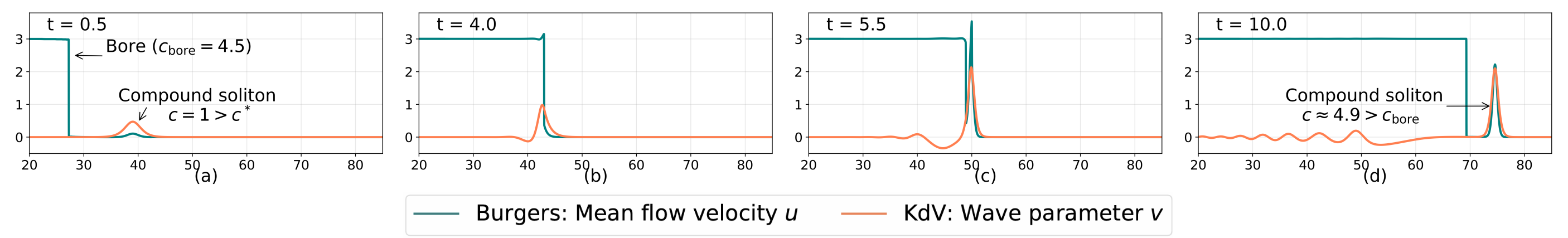}
    \caption{Reflection type interaction that a compound soliton outside the Burgers bore current get swept and form a faster compound soliton leaving the bore. (a) Burgers bore ($u_0=3$) approaches to a compound soliton ($c=1$).  (b) - (c) The compound soliton interacts with the bore nonlinearly and gains amplitude $u, v$. (d) A faster compound soliton is formed which leaves the bore. For spacetime plot see appendix \ref{appendix: spacetime}.}
    \label{fig:Reflection}
\end{figure}
\begin{figure}[H]
    \centering
    \includegraphics[width=.9\linewidth]{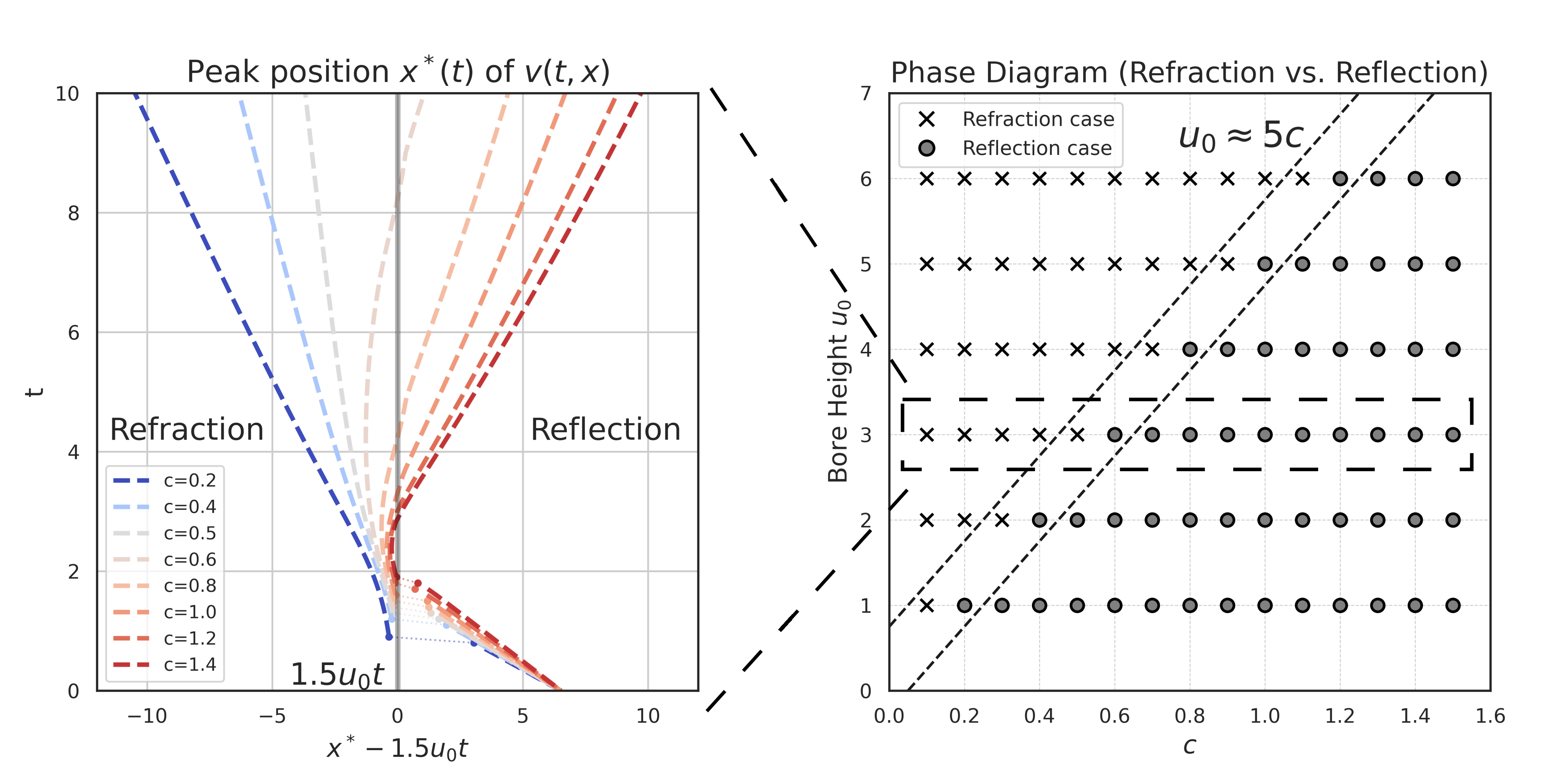}
    \caption{\textbf{Left}: A series of simulations showing a Burgers bore ($u_0=3$) overtaking compound solitons with different speed in the bore's moving frame. Dashed lines indicate the  position $x^*(t)=\operatorname{argmax}_x v(t)$ of the soliton/solitary wave peak over time.\textbf{Right}: Summary of refraction and reflection type outcomes for different $u_0$ and $c$ values under the viscosity $\eta=0.01$. The dotted line indicates the approximate threshold $c^*\approx u_0/5$ separating the two regimes.}
    \label{fig:phase1}
\end{figure}
When the speed of an incoming compound soliton exceeds a critical threshold, it is no longer refracted and absorbed into the bore. Instead, the soliton gets `reflected' and forms a faster compound soliton carrying away part of the bore. The critical threshold for the soliton speed $c$, as shown in Figure~\ref{fig:phase1}, follows an approximately linear relation $u_0 \approx 5c$ for small values of $u_0$ and $c$. For a given soliton, a slower overtaking bore leads to a longer interaction time near the front, enabling the soliton to reorganize and finally escape.

\textbf{Soliton fusion. }
\begin{figure}[H]%
    \centering
    \includegraphics[width=\linewidth]{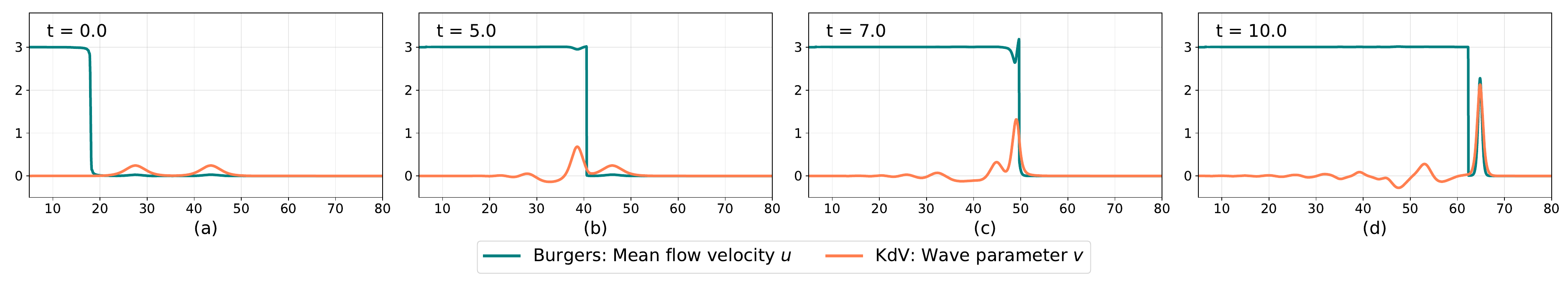}
    \caption{Fusion of two weak compound solitons ($c = 0.5$) is triggered by the overtaking of a Burgers bore ($u_0 = 3$). (a) The bore approaches two refracting compound solitons. (b) The bore current captures the left soliton near the bore front. (c) Before the left soliton falls behind the bore front, it is carried into the collision with the right soliton. (d) After the soliton fusion, one faster compound soliton forms and propagates ahead. For spacetime plot see appendix \ref{appendix: spacetime}.
}
    \label{fig:fusion}
\end{figure}

\begin{figure}[H]%
    \centering
    \includegraphics[width=\linewidth]{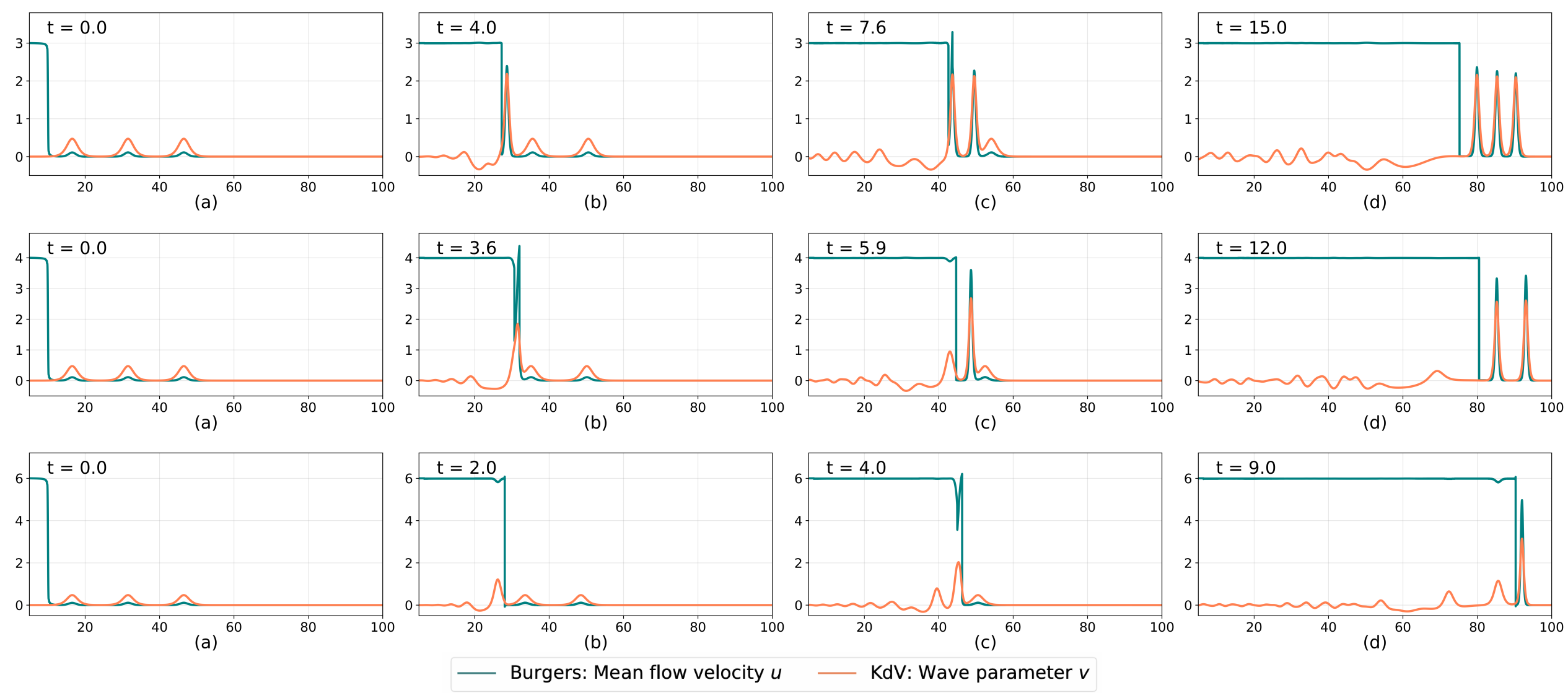}
    \caption{Three trials of bores with heights $u_0=3,4,6$ overtaking the same initial set of three slower compound solitons; each with speed $c=1$ and separated at a distance of 15. One sees that the different initial bore heights in the overtaking interactions result in the creation of different numbers of compound solitons leaving the bore. These results are classified as either reflections or refractions, as observed in the frame of the moving bore. The $u_0=3$ case (Top) yields three compound solitons reflected. The $u_0=4$ case (Middle) yields two compound solitons reflected and one refracted. The $u_0=6$ case (Bottom) yields one compound soliton reflected and two refracted. Each row shows the bore before interacting with (a)  the first soliton, (b) the second soliton, (c) the third soliton, and (d) after all interactions.
}
    \label{fig: multi_soliton_fusion}
\end{figure}
Near the refraction–reflection threshold, the interaction time between a compound soliton and the bore becomes significantly extended. In this regime, a soliton can remain near the bore front for a long time without immediately escaping or falling behind. If it encounters another incoming soliton during this prolonged interaction, the two may fuse into a single, stronger soliton with enough momentum to escape the bore.  As shown in Figure~\ref{fig:fusion}, the bore ($u_0=3$) overtakes two weak compound solitons ($c=0.5$)—each of which would be refracted individually (see Figure~\ref{fig:phase1}), but releases one single, faster, merged soliton, representing a reflection-type interaction with a different number of outgoing solitons. This shows the highly nonlinear nature of the Burgers–swept KdV system.
Figure~\ref{fig: multi_soliton_fusion} highlights the nontrivial outcomes of overtaking interactions between a Burgers bore and a sequence of compound solitons. Even when the incoming solitons are identical and evenly spaced, the number of outgoing solitons observed outside the bore can vary, depending on the bore strength $u_0$ and interaction timing. In this sense, the bore acts as a soliton fusion mechanism, combining multiple incoming solitons into fewer outgoing ones. This reduction in soliton number reflects the nonlinear and complex nature of wave–current interactions in the Burgers–swept KdV system. 

\section{Summary, Conclusion and Open Problems}
In this work we have considered the derivation and numerical simulation of the Burgers–swept KdV system shown in \eqref{eq: Burgers–swept KdV}. The main contribution of this work are summarised as follows. 

In Section \ref{sec-2}, we considered a Doppler shifted variational principles to derive the Burgers–swept KdV system in \eqref{eq: Burgers–swept KdV}. The Hamiltonian structure of the Burgers–swept KdV system in equation \eqref{LP-brkt} reveals the Lie transport nature of the Burgers–swept KdV system. Namely, the quantity $m=u-\tfrac12 v^2$ is transported by the bore velocity $u$ and the KdV velocity $v$ satisfies a conservation equation which for $m=0$ reduces to Gardner's equation. Thus, we recognised the Burgers–swept KdV system as an extension of the celebrated Gardner equation by transforming variables to the equivalent Burgers–swept KdV system in equation \eqref{v+ep-eqn-intro}. Additionally, we derived the compound soliton solution (Theorem \ref{thm:1}) and the travelling wave soliton solutions of the Burgers–swept KdV system (Theorem \ref{Thm2.2}).  

In Section \ref{sec-3}, numerical simulations reveal a range of nonlinear wave-current interaction (WCI) phenomena, including the emergence of compound Gardner solitons from nonlinear interactions between Burgers bores and compound waves. 
These numerical simulations show Burgers–swept KdV dynamics of Burgers bore and KdV wave velocity profiles undergoing nonlinear interactions which develop into localised compound structures  classified broadly as refraction or reflection. Under reflection, these localized structures separate and propagate rightward, leaving the others behind. Asymptotically, these separated, rightward-moving localized structures become rightward-moving trains of compound Gardner solitons. 

Future work may investigate how these recurrences of Gardner-equation behaviour occur as the system transitions locally between integrable and non-integrable regimes. This is a question of both mathematical interest as an infinite-dimensional version of the FPUT problem, and of physical relevance in the broader context of nonlinear wave dynamics in fluid mechanics and plasma physics where Gardner's equation appears.

\subsection*{Acknowledgements} 
We are grateful to Collin Cotter, Dan Crisan, Theo Diamantakis, Boris Khesin, David Levermore and James Woodfield for several thoughtful suggestions. DH and RH were partially supported by Office of Naval Research grant N00014-22-1-2082. DH, OS, and HW were also supported by European Research Council Synergy Grant DLV-856408. OS acknowledges funding for a research fellowship from Quadrature Climate Foundation. 

\bibliographystyle{plainnat}
\bibliography{bibliography}%

\begin{thebibliography}{12}
\providecommand{\natexlab}[1]{#1}
\providecommand{\url}[1]{\texttt{#1}}
\expandafter\ifx\csname urlstyle\endcsname\relax
  \providecommand{\doi}[1]{doi: #1}\else
  \providecommand{\doi}{doi: \begingroup \urlstyle{rm}\Url}\fi

\bibitem[Ablowitz(2011)]{Ablowitz2011}
Mark~J. Ablowitz.
\newblock \emph{{Nonlinear Dispersive Waves}}.
\newblock Cambridge University Press, 9 2011.
\newblock ISBN 9781107012547.
\newblock \doi{10.1017/CBO9780511998324}.

\bibitem[Andrews and Mcintyre(1978)]{AM1978}
D.~G. Andrews and M.~E. Mcintyre.
\newblock An exact theory of nonlinear waves on a lagrangian-mean flow.
\newblock \emph{Journal of Fluid Mechanics}, 89:\penalty0 609--646, 12 1978.
\newblock ISSN 0022-1120.
\newblock \doi{10.1017/S0022112078002773}.

\bibitem[B\"{u}hler(2014)]{Buhler2014}
Oliver B\"{u}hler.
\newblock \emph{{Waves and Mean Flows}}.
\newblock Cambridge University Press, 3 2014.
\newblock ISBN 9781107669666.
\newblock \doi{10.1017/CBO9781107478701}.

\bibitem[{Burns} et~al.(2020){Burns}, {Vasil}, {Oishi}, {Lecoanet}, and
  {Brown}]{2020PhRvR...2b3068B}
Keaton~J. {Burns}, Geoffrey~M. {Vasil}, Jeffrey~S. {Oishi}, Daniel {Lecoanet},
  and Benjamin~P. {Brown}.
\newblock {Dedalus: A flexible framework for numerical simulations with
  spectral methods}.
\newblock \emph{Physical Review Research}, 2\penalty0 (2):\penalty0 023068,
  April 2020.
\newblock \doi{10.1103/PhysRevResearch.2.023068}.

\bibitem[Dombret et~al.(2025)Dombret, Holm, Hu, Street, and Wang]{Dombret2025}
Albert Dombret, Darryl~D. Holm, Ruiao Hu, Oliver~D. Street, and Hanchun Wang.
\newblock \emph{Collisions of Burgers Bores with Nonlinear Waves}, pages
  25--43.
\newblock Springer International Publishing, 2025.
\newblock \doi{10.1007/978-3-031-70660-8_2}.

\bibitem[Frenkel(1934)]{Frenkel1934WaveMA}
J.~Frenkel.
\newblock {Wave mechanics: Advanced general theory}.
\newblock \emph{The Mathematical Gazette}, 18:\penalty0 208, 1934.

\bibitem[Gardner et~al.(1967)Gardner, Greene, Kruskal, and Miura]{Gardner1967}
Clifford~S. Gardner, John~M. Greene, Martin~D. Kruskal, and Robert~M. Miura.
\newblock {Method for Solving the Korteweg-deVries Equation}.
\newblock \emph{Phys. Rev. Lett.}, 19:\penalty0 1095--1097, Nov 1967.
\newblock \doi{10.1103/PhysRevLett.19.1095}.

\bibitem[Gorshkov et~al.(2012)Gorshkov, Soustova, Ermoshkin, and
  Zaytseva]{KI2012}
K.~Gorshkov, I.~Soustova, Alexey Ermoshkin, and N.V. Zaytseva.
\newblock Evolution of the compound {G}ardner-equation soliton in the media
  with variable parameters.
\newblock \emph{Radiophysics and Quantum Electronics}, 55, 10 2012.
\newblock \doi{10.1007/s11141-012-9373-1}.

\bibitem[Grimshaw(1984)]{Grimshaw1984}
R~Grimshaw.
\newblock Wave action and wave-mean flow interaction, with application to
  stratified shear flows.
\newblock \emph{Annual Review of Fluid Mechanics}, 16:\penalty0 11--44, 1 1984.
\newblock ISSN 0066-4189.
\newblock \doi{10.1146/annurev.fl.16.010184.000303}.

\bibitem[Holm et~al.(2023)Holm, Hu, and Street]{HOLM2023133847}
Darryl~D. Holm, Ruiao Hu, and Oliver~D. Street.
\newblock Lagrangian reduction and wave mean flow interaction.
\newblock \emph{Physica D: Nonlinear Phenomena}, 454:\penalty0 133847, 2023.
\newblock ISSN 0167-2789.
\newblock \doi{https://doi.org/10.1016/j.physd.2023.133847}.

\bibitem[Ostrovsky et~al.(2024)Ostrovsky, Pelinovsky, Shrira, and
  Stepanyants]{ostrovsky2024localized}
L~Ostrovsky, E~Pelinovsky, V~Shrira, and Y~Stepanyants.
\newblock Localized wave structures: Solitons and beyond.
\newblock \emph{Chaos: An Interdisciplinary Journal of Nonlinear Science},
  34\penalty0 (6), 2024.

\bibitem[Pizzo and Salmon(2021)]{Pizzo_Salmon_2021}
Nick Pizzo and Rick Salmon.
\newblock Particle description of the interaction between wave packets and
  point vortices.
\newblock \emph{Journal of Fluid Mechanics}, 925:\penalty0 A32, 2021.
\newblock \doi{10.1017/jfm.2021.661}.

\end{thebibliography}

\newpage

\appendix
\section{Variational derivations of Burgers, KdV and the Burgers–swept KdV system}
\label{app-derivingBurgers–swept KdV}

\subsection{Variational derivations of Burgers and KdV equations}
This appendix begins by briefly mentioning the geometric structures of the Burgers' and KdV equations. It then provides two auxialary derivations of the 
\smallskip

\paragraph{Burgers' equation in one dimension.}The Burgers' equation follows from applying Hamilton's Principle to the action corresponding to the kinetic energy Lagrangian
\begin{equation}
	\ell_{B}(u) = \int_{\mathbb{R}} \frac{u^2}{2}\,dx \,,
\end{equation}
where the variations in $u$ are taken to be \emph{Lin constrained} Euler-Poincar\'e variations given by
\begin{align*}
    \delta u = {\eta _t} + \eta u - u\eta  = \partial_t \eta - {\rm ad}_u \eta \,,
\end{align*}
for an arbitrary vector field $\eta \in \mathfrak{X}(\mathbb{R})$. In the relation above, ${\rm ad}$ denotes the standard adjoint action of the space of vector fields in itself, inherited from the group structure of the diffeomorphism group.  This variation follows from taking arbitrary variations, vanishing at the endpoints, of the flow $g_t \in {\rm Diff}(\mathbb{R})$ related to the velocity by $u = \dot{g}g^{-1}$, where concatenation has been used to denote tangent lifted group translation. The arbitrary vector field in the variation is then given by $\eta = \delta g g^{-1}$. The equation follows from Hamilton's Principle as
\begin{align*}
	0 &= \delta\int_{t_0}^{t_1} \ell_B(u)\,dt = \int_{t_0}^{t_1}\int_{\mathbb{R}} u \,\delta u \,dx\,dt
	\\
	&=  \int_{t_0}^{t_1}\int_{\mathbb{R}}  u (\partial_t \eta - {\rm ad}_u) \,dx\,dt =  \int_{t_0}^{t_1}\int_{\mathbb{R}} (\p_t u + {\rm ad}^*_uu)\,\eta\,dx\,dt \,,
\end{align*}
where ${\rm ad^*}$ is the dual operator to ${\rm ad}$ with respect to the duality pairing. Since $\eta$ is arbitrary, the fundamental lemma of the calculus of variations gives the equation
\begin{equation}
	0 = \p_tu + {\rm ad}^*_uu = \p_t u + u\p_xu + \p_x(u^2) = u_t + 3uu_x \,.
\end{equation}
\smallskip

\paragraph{The KdV equation.} The KdV equation has a noncanonical Hamiltonian structure. That is, it can be expressed in the form
\begin{equation}
	\p_t v = \mathcal{D}\frac{\delta H_{KdV}}{\delta v} \,,\quad\hbox{where}\quad \mathcal{D} = -\p_x \,,\quad\hbox{and}\quad H_{KdV} = \int_{\mathbb{R}} v^3 -  \frac{\gamma}{2} v_x^2\,dx \,.
\end{equation}
Indeed, computing the variational derivative of the Hamiltonian, this implies
\begin{equation}
	v_t = -\p_x\left( 3v^2 + \gamma v_{xx} \right) = -6vv_{x} - \gamma v_{xxx} \,,
\end{equation}
which is the well known KdV equation. Just as canonical Hamiltonian systems have a phase space variational description, this noncanonical system can be deduced from the following Lagrangian
\begin{equation}
	\ell_{KdV}(v) = \int_{\mathbb{R}} \frac12 v(\mathcal{D}^{-1}v)_t - H_{KdV}(v) \,dx =  \int_{\mathbb{R}} \frac12 v\mathcal{D}^{-1}v_t - v^3 + \frac{\gamma}{2} v_x^2 \,dx \,.
\end{equation}
Indeed, applying Hamilton's Principle, we find that
\begin{align*}
	0 &= \delta\int_{t_0}^{t_1} \ell_{KdV}(v)\,dt =  \int_{t_0}^{t_1} \frac12 \delta v\mathcal{D}^{-1}v_t + \frac12 v \mathcal{D}^{-1}\delta v_t - 3v^2\delta v + \gamma v_x\delta v_x \,dx\,dt
	\\
	&= \int_{t_0}^{t_1}\int_{\mathbb{R}} \delta v\left(\mathcal{D}^{-1}v_t - 3v^2 - \gamma v_{xx} \right) \,dx \,,
\end{align*}
and since $\delta v$ is arbitrary this implies the KdV equation.

\subsection{Variational derivation of the Burgers–swept KdV system}
\label{CoM-app}

This variational derivation of the Burgers–swept KdV system in \eqref{eq: Burgers–swept KdV} proceeds by introducing the composition of two time-dependent maps $g$ and $\phi$ in defining the following Lagrangian:
\begin{align}
\begin{split}
0 = \delta S &= \delta\!\! \int \ell(u, n, v)\,dt
= \delta \!\!\int \ell_{B}(u)+\ell_{K d V}(n, v)\,dt
\\&
=\frac12\delta\!\! \int_{\mathcal{D}} u^2 
+  \nu n_x+2 n_x^3 - \gamma n_{x x}^2 \,d x \,dt    
\label{eq: HP var appendix}
\end{split}
\end{align}
with constant $\gamma$ and Eulerian fluid variables
\begin{align}
u = \dot{g}\,g^{-1}
\,,\quad
n = \phi\,g^{-1}
\,,\quad
\nu = \phi_t\,g^{-1}
\,.   \label{eq:HP var defs appendix}
\end{align}
The variations of velocity vector field $u=\dot{g} g^{-1}$ and of 0-forms $n=\phi g^{-1}$ and $\nu= \phi_t g^{-1}$ with respect to $\eta = \delta g g^{-1}$ and $\xi = \delta \phi g^{-1}$ are calculated as follows,\\

\begin{equation*}
\begin{array}{lll}
\delta u = \delta(\dot{g} g^{-1}) & \delta n = \delta (\phi g^{-1}) & \delta \nu = \delta (\phi_t g^{-1}) \\
= (\delta g)_t g^{-1} + \dot{g} \delta g^{-1} 
& = \delta \phi g^{-1} - \phi \delta g^{-1} 
& = (\delta \phi_t) g^{-1} - \phi_t \delta g^{-1} \\
= (\delta g g^{-1})_t - \delta g (g^{-1})_t - \dot{g} g^{-1} \delta g g^{-1} 
& = \delta \phi g^{-1} - \phi g^{-1} \delta g g^{-1} 
& = (\delta \phi g^{-1})_t + \delta \phi g^{-1} g_t g^{-1} - \phi_t g^{-1} \delta g g^{-1} \\
= \eta_t + \eta u - u \eta 
& = \xi - n \eta 
& = \xi_t + \xi u - \nu \eta \\
= \eta_t - \mathcal{L}_u \eta 
& = \xi - \mathcal{L}_\eta n 
& = \xi_t + \mathcal{L}_u \xi - \mathcal{L}_\eta \nu
\end{array}
\end{equation*}

Hence, we arrive at the Euler-Poincare equations:
\begin{equation}
    \begin{aligned}
\left(\partial_t+\operatorname{ad}_u^*\right) \frac{\delta l}{\delta u} & =\frac{\delta l}{\delta n} \diamond n+\frac{\delta l}{\delta \nu} \diamond \nu 
\,,\\
\left(\partial_t+\mathcal{L}_u\right) \frac{\delta l}{\delta \nu} & =\frac{\delta l}{\delta n} 
\,,\\
\left(\partial_t+\mathcal{L}_u\right) n & =\nu
\,.\end{aligned}
\label{EP-eqns-CoM}
\end{equation}
In 1D the diamond operator is given by $\diamond=-\partial_x$, then we have the variations of the Lagrangian in \eqref{eq: HP var appendix} as: 

\begin{equation}
    \begin{aligned}
\frac{\delta l}{\delta u} & =u \mathrm{~d} x \\
\frac{\delta l}{\delta \nu} & =\frac{n_x}{2} \mathrm{~d} x=\frac{1}{2} \mathrm{~d} n \\
\frac{\delta l}{\delta n} & =-\partial_x \frac{\delta l}{\delta n_x}+\partial_{x x} \frac{\delta l}{\delta n_{x x}} =\left(-\frac{1}{2} \nu_x-6 n_x n_{x x}-\gamma n_{x x x x}\right) \mathrm{d} x \\
& =\left(-\frac{1}{2} \partial_x\left(\left(\partial_t+u \partial_x\right) n\right)-6 n_x n_{x x}-\gamma n_{x x x x}\right) \mathrm{d} x \\
&=-\frac{1}{2} \left(\partial_t+u \partial_x\right) dn-\left(6 n_x n_{x x}+\gamma n_{x x x x}\right) \mathrm{d} x
\end{aligned}
\label{BSKdV-varsCoM}
\end{equation}
The second Euler--Poincar\'e equation in \eqref{EP-eqns-CoM} and the second variational equation in \eqref{BSKdV-varsCoM} combine to yield the following:
\begin{align}
\begin{split}
    \left(\partial_t+\mathcal{L}_u\right) \frac{\delta l}{\delta \nu} &=\left(\partial_t+\mathcal{L}_u\right)\left(\frac{1}{2} \mathrm{~d} n\right)
   \\=\frac{\delta l}{\delta n}& = - \frac{1}{2}\left(\partial_t+\mathcal{L}_u\right)  dn
    -6 n_x n_{x x} \mathrm{~d} x 
    -\gamma n_{x x x x} \mathrm{~d} x 
\end{split}
\end{align}
after commuting the differential and the Lie derivative in the above equation one finds: 
\begin{equation}
    d\left[ ({{\partial _t} + {{\cal L}_u}}) n + 3{n_x}^2
    + \gamma {n_{xxx}}\right] = 0
    \,,
\end{equation}
or equivalently,
\begin{equation}
    n_{x t}+u_x n_x+u n_{x x}+6 n_x n_{x x}+\gamma n_{x x x x} = 0
\,.\end{equation}
Now defining $v=n_x$ implies the second equation in Burgers–swept KdV system:
\begin{equation}
    v_t+(uv)_x=-(6 v v_x+\gamma v_{x x x})
\,.\end{equation}
In 1D the diamond operator is given by $\diamond=-\partial_x$, then we have 
\begin{equation}
    \left(\partial_t+3 u \partial_x\right) u=-\frac{n_x}{2} \nu_x+\left(\frac{1}{2} \nu_x-6 n_x n_{x x}-n_{x x x x}\right) n_x
    \,,
\end{equation}
which is the first equation in Burgers–swept KdV system:
\begin{equation}
    u_t + 3uu_x = -v(6vv_x + \gamma v_{xxx})
    \,.
\end{equation}

\newpage
\section{Spacetime Diagram for the examples} \label{appendix: spacetime}

\begin{figure}[H]
    \centering
    \includegraphics[width=0.9\linewidth]{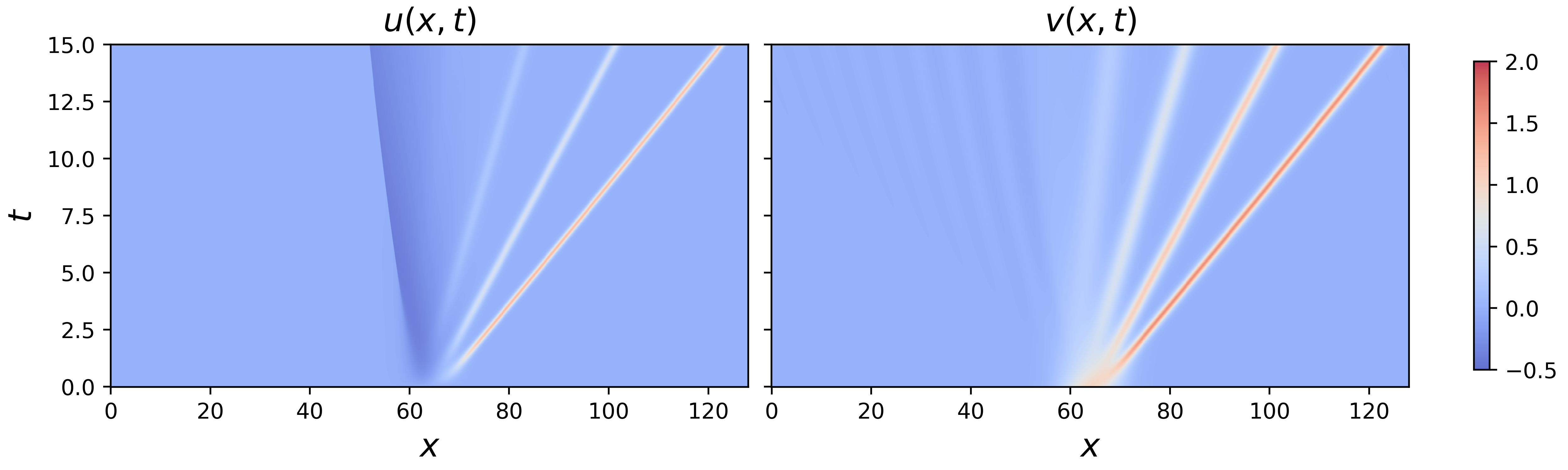}
    \caption{Spacetime diagram for emergence of soliton train (the bottom row of figure \ref{fig: kdv_emerge}). The initialization $u(x)=0, v(x)=\exp (-x^2 / 32)$. }
        \label{fig: st_gaussian}
\end{figure}

\begin{figure}[H]
        \centering
        \includegraphics[width=.9\linewidth]{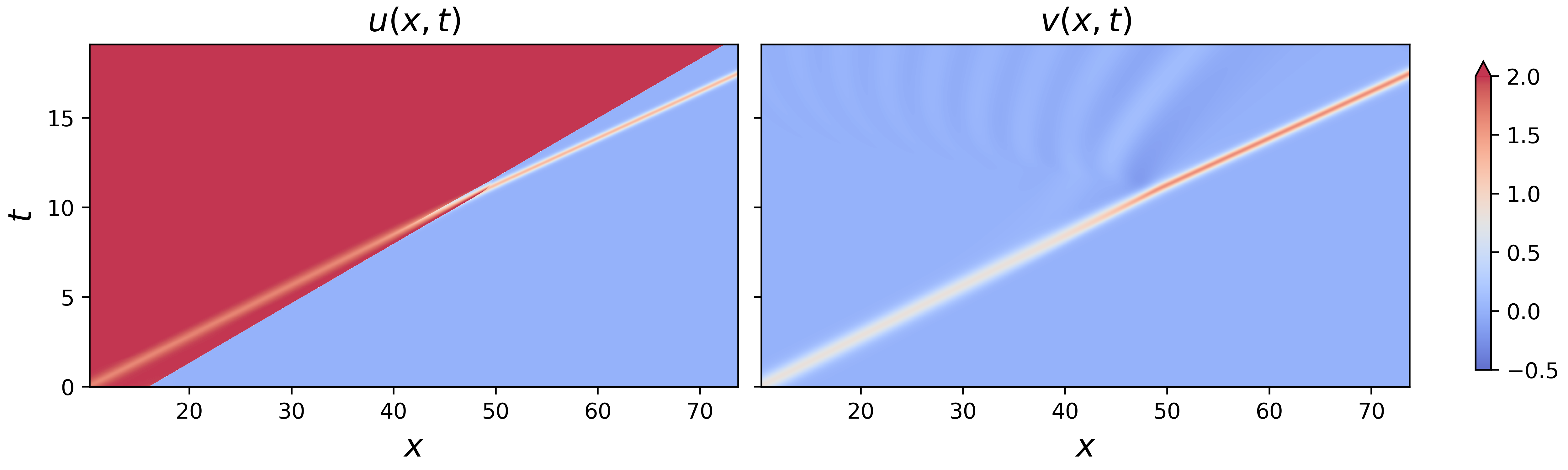}
         \caption{Spacetime diagram for soliton refraction from solitary wave to compound soliton (The bottom row of the figure \ref{fig: refraction}).}
        \label{fig: st_refraction_inner}
    \end{figure}
\begin{figure}[H]
    \centering
    \includegraphics[width=.9\linewidth]{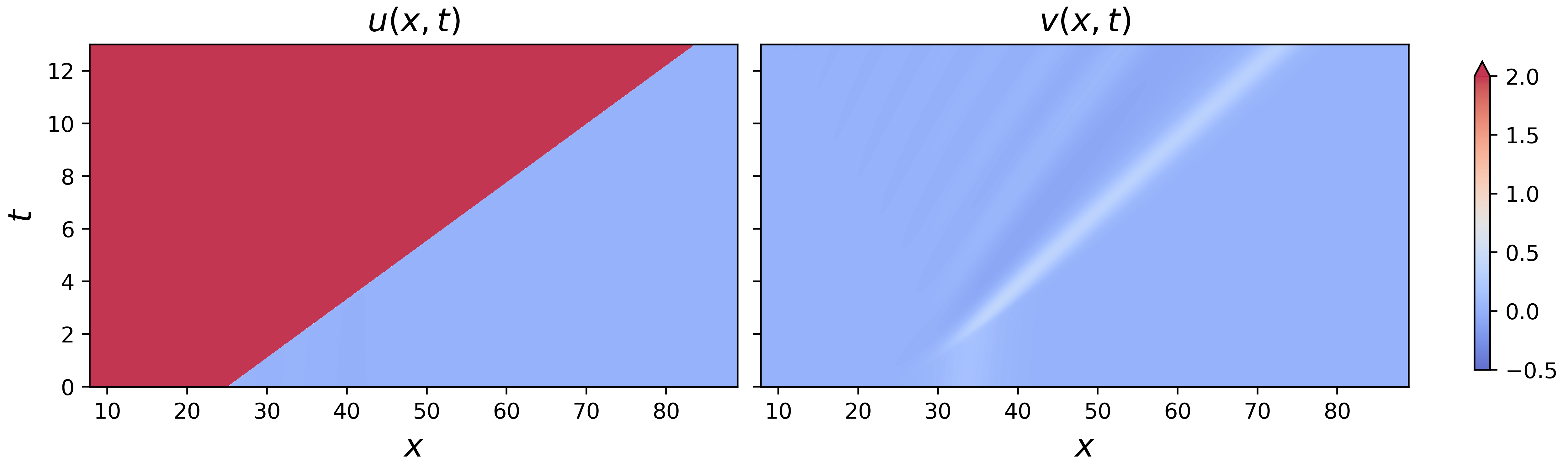}
    \caption{Spacetime diagram for soliton refraction (The top row of the figure \ref{fig: refraction}).}
        \label{fig: st_refraction}
\end{figure}

\begin{figure}[H]
    \centering
    \includegraphics[width=.9\linewidth]{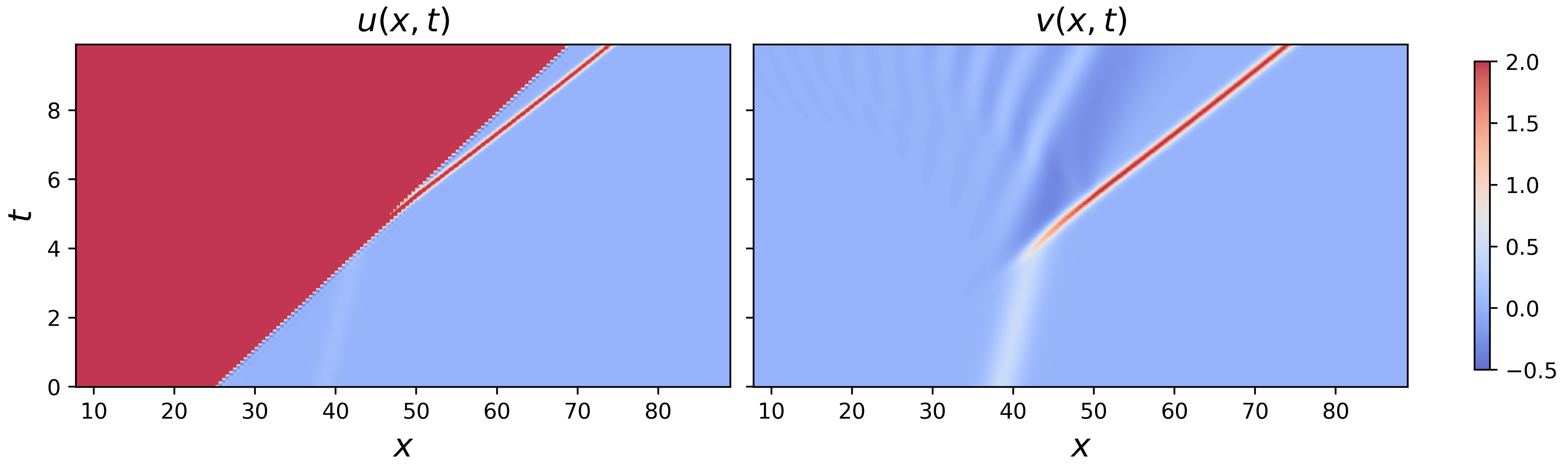}
    \caption{Spacetime diagram for soliton reflection (figure \ref{fig:Reflection}).}
    \label{fig: st_reflection}
\end{figure}

\begin{figure}[H]
        \centering
        \includegraphics[width=.9\linewidth]{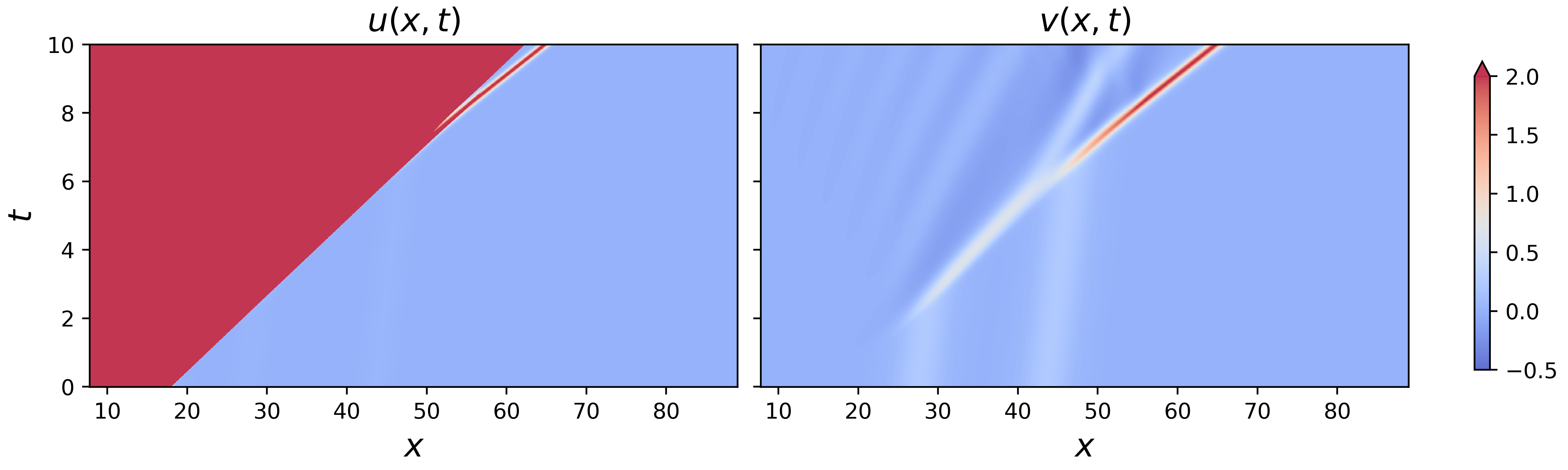}
        \caption{Spacetime diagram for soliton fusion (figure \ref{fig:fusion})}
        \label{fig: st_fusion}
    \end{figure}

\end{document}